\newtheorem{theorem}{Theorem}
\newtheorem{lemma}{Lemma}
\theoremstyle{definition}
\newtheorem{definition}{Definition}
\newtheorem{remark}{Remark}
\newcommand{\R}{\mathbb{R}}
\newcommand{\mc}[1]{\mathcal{#1}}
\newcommand{\T}{^\top}
\newcommand{\rank}{\operatorname{rank}}
\newcommand{\bzero}{\mathbf{0}}
\newcommand{\bd}{\mathbf{d}}
\renewcommand{\bf}{\mathbf{f}} % NOTE: use \textbf if you really want to write bold non-math text.
\newcommand{\bg}{\mathbf{g}}
\newcommand{\bh}{\mathbf{h}}
\newcommand{\bk}{\mathbf{k}}
\newcommand{\bq}{\mathbf{q}}
\newcommand{\br}{\mathbf{r}}
\newcommand{\bu}{\mathbf{u}}
\newcommand{\bx}{\mathbf{x}}
\newcommand{\bz}{\mathbf{z}}
\newcommand{\bA}{\mathbf{A}}
\newcommand{\bB}{\mathbf{B}}
\newcommand{\bD}{\mathbf{D}}
\newcommand{\bF}{\mathbf{F}}
\newcommand{\bG}{\mathbf{G}}
\newcommand{\bH}{\mathbf{H}}
\newcommand{\bI}{\mathbf{I}}
\newcommand{\bK}{\mathbf{K}}
\newcommand{\bN}{\mathbf{N}}
\newcommand{\bP}{\mathbf{P}}
\newcommand{\bQ}{\mathbf{Q}}
\newcommand{\bpsi}{\bm{\psi}}
\newcommand{\bPhi}{\bm{\Phi}}
\newcommand{\bXi}{\bm{\Xi}}
\newcommand{\bOmega}{\bm{\Omega}}
\newcommand{\flow}{\bm{\varphi}}
\definecolor{myblue}{RGB}{49, 114, 174}
\definecolor{myred}{rgb}{0.796, 0.235, 0.2}
\definecolor{mygreen}{rgb}{0.22, 0.596, 0.149}
\definecolor{mypurple}{rgb}{0.584,0.345,0.698}
\title{\LARGE \textbf{
A Layered Control Perspective on Legged Locomotion: \\
Embedding Reduced Order Models via Hybrid Zero Dynamics}
}
\author{Sergio A. Esteban, Max H. Cohen, Adrian B. Ghansah and Aaron D. Ames
\thanks{
This research was supported by NSF award No. 1932091 and Technology Innovation Institute (TII).
}
\thanks{
The authors are with the Division of Engineering and Applied Science, California Institute of Technology, Pasadena, CA 91125 USA, {\tt\small\{sesteban, maxcohen, aghansah, ames\}@caltech.edu}.
}
}
\begin{document}

\maketitle
\thispagestyle{empty}
\pagestyle{empty}

%%%%%%%%%%%%%%%%%%%%%%%%%%%%%%%%%%%%%%%%%%%%%%%%%%%%%%%%%%%%%%%%%%%%%%%%%%%%%%%%
\begin{abstract}
Reduced-order models (ROMs) provide a powerful means of synthesizing dynamic walking gaits on legged robots. Yet this approach lacks the formal guarantees enjoyed by methods that utilize the full-order model (FOM) for gait synthesis, e.g., hybrid zero dynamics. This paper aims to unify these approaches through a layered control perspective. In particular, we establish conditions on when a ROM of locomotion yields stable walking on the full-order hybrid dynamics. To achieve this result, given an ROM we synthesize a zero dynamics manifold encoding the behavior of the ROM---controllers can be synthesized that drive the FOM to this surface, yielding hybrid zero dynamics. We prove that a stable periodic orbit in the ROM implies an input-to-state stable periodic orbit of the FOM's hybrid zero dynamics, and hence the FOM dynamics. This result is demonstrated in simulation on a linear inverted pendulum ROM and a 5-link planar walking FOM. 
\end{abstract}

%%%%%%%%%%%%%%%%%%%%%%%%%%%%%%%%%%%%%%%%%%%%%%%%%%%%%%%%%%%%%%%%%%%%%%%%%%%%%%%%
\section{Introduction}
%%%%%%%%%%%%%%%%%%%%%%%%%%%%%%%%%%%%%%%%%%%%%%%%%%%%%%%%%%%%%%%%%%%%%%%%%%%%%%%
%
From the influential Raibert heuristic \cite{raibert1986legged} to the rich catalog of inverted pendulum variants \cite{wensing2017template}, reduced-order models (ROMs) have long been used to achieve locomotion on legged robots. The success of these models is attributed to the fact that
% Reasons for their long presence in legged robotics research are that 
they generalize well across most legged platforms and, by design, distill the complexity of whole-body dynamics down to a lower-dimensional description while retaining key features that govern locomotion. Models such as the linear inverted pendulum (LIP)\cite{kajita20013d}, and its variants \cite{xiong20223, gong2020angular, dai2024multi}, and the spring-loaded inverted pendulum (SLIP) \cite{blickhan1989spring, wensing2013high} focus on regulating the center of mass and center of pressure to enable dynamic behavior such as walking or running. At their core, these ROMs provide a framework for guiding the robot's actuated dynamics in response to its unactuated dynamics. In this work, we develop a formal theory for how these ROMs exploit the structure of uncontrollable dynamics, termed the \textit{zero dynamics}, to enable full-order model stability.

ROMs have recently been incorporated into broader control frameworks. Applications include foot step planning \cite{pratt2006capture, englsberger2011bipedal, englsberger2015three, mesesan2019dynamic}, reinforcement learning \cite{green2021learning, bang2024rl}, model predictive control \cite{esteban2025reduced, gibson2022terrain}, and nominal gait stabilization \cite{ghansah2024dynamic, ye2023decoupled}. Their significance in control frameworks has also led to synthesizing ROMs using learning \cite{csomay2024robust, chen2024reinforcement, da2019combining}. Despite their long history and widespread usage, theoretical frameworks connecting ROMs and full-order models (FOMs) remain underexplored in the literature.
%%%%%%%%%%%%%%%%%%%%%%%%%%%%%%%%%%%%%%%%%%%%%%%%%%%%%%%%%%%%%%%%%%%%%%%%%%%%%%%
\begin{figure}
    \centering
    \includegraphics[width=0.8\linewidth]{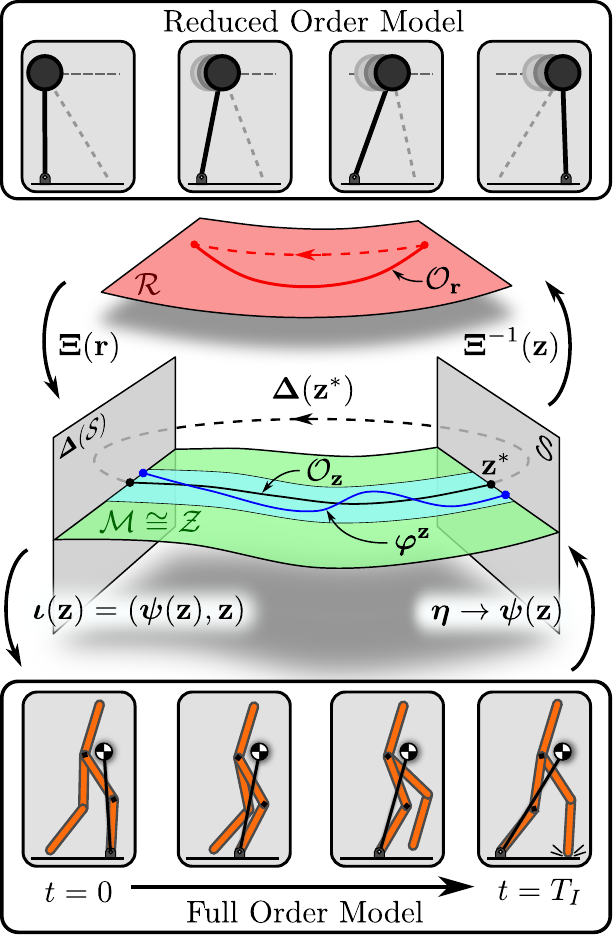}
    \caption{We consider a layered control approach that leverages a reduced-order model (ROM) of locomotion to stabilize the dynamics of a full-order model (FOM) bipedal robot.}
    \label{fig:layered_architecture}
        \vspace{-7.0mm}
\end{figure}
%%%%%%%%%%%%%%%%%%%%%%%%%%%%%%%%%%%%%%%%%%%%%%%%%%%%%%%%%%%%%%%%%%%%%%%%%%%%%%%
One tool for analyzing the connection between the FOM dynamics and their corresponding ROM abstractions is the approximate simulation approach \cite{girard2009hierarchical} wherein a Lyapunov-like function, termed a \textit{simulation function}, provides a quantitative metric characterizing the distance between trajectories of ROMs and FOMs. In these approaches, formal characterizations are developed for hierarchies composed of a class of linear systems. The work presented in \cite{kurtz2019formal, kurtz2020approximate} is based on approximate simulation and provides a formal treatment of a scenario that considers a bipedal robot and the canonical LIP model. 

The hybrid zero dynamics (HZD) framework provides an alternative approach to generating stable walking by leveraging the FOM hybrid dynamics \cite{westervelt2003hybrid}.  HZD has a well-established track record for gait generation of legged robots \cite{hereid2014dynamic, ames2017hybrid, reher2016realizing, reher2019dynamic}. This approach focuses on designing \textit{virtual constraints} for the actuated dynamics that, by construction, enforce the unactuated dynamics to evolve on a lower-dimensional, attractive, and invariant manifold -- even in the presence of impulsive effects, which are inherent in legged locomotion. These virtual constraints are typically synthesized using large-scale offline trajectory optimization to produce a set of outputs that, when tracked, result in stabilization of the hybrid system. In the HZD framework, the zero dynamics provide an \emph{exact} reduced-order representation of the FOM. However, these exact ROMs, in general, cannot be represented in terms of the aforementioned LIP models, which may be seen as an approximation of the corresponding zero dynamics \cite{grizzle2017virtual}. Although the HZD framework has been successfully realized on multiple systems, its reliance on offline trajectory optimization of periodic orbits limits its suitability for online control.

% the notion of reduced-order model are the zero dynamics while in the ROM framework, the zero dynamics are directly approximated by the ROM dynamics. In this work, we leverage notions from HZD to show how ROMs stabilize the underactuated dynamics of the full-order model through appropriate choice of foot placement. 

% \textcolor{blue}{Maybe absorb ZDP paragraph into another one.}
% More recently, the authors of \cite{csomay2024robust, compton2024constructive} proposed the framework of \emph{zero dynamics policies} (ZDPs), a constructive method for stabilizing hybrid systems that draws on elements from both ROM-based methods and the HZD framework. Similar to the HZD framework, the ZDP framework focuses on driving the full-order system to a lower-dimensional invariant manifold, but rather than generating virtual constraints via trajectory optimization they learn a reduced-order model inspired zero dynamics policy. 

In this paper, we present a formal framework for achieving stable locomotion for full-order hybrid systems that unites elements of ROMs with the HZD framework. To this end, we take inspiration from the framework of \emph{zero dynamics policies} (ZDPs) \cite{csomay2024robust, compton2024constructive}, a constructive method for stabilizing hybrid systems that drives the system to a lower-dimensional invariant manifold encoding a ROM.  In particular, we encode behaviors of a desired ROM within a zero dynamics manifold embedded in the full-order state space. We establish conditions under which this manifold is invariant for the full-order hybrid dynamics and produces hybrid zero dynamics (Sec. \ref{sec:HZD}). Rather than assuming exponential stability of these zero dynamics (as in the HZD framework) or learning a ZDP, we formally characterize the connection between the zero dynamics and a ROM using an input-to-state stability (ISS) framework. In particular, we show that, under certain conditions, exponentially stable periodic orbits of the ROM correspond to exponentially ISS (E-ISS) periodic orbits of the zero dynamics (Sec. \ref{sec:ROM}). Using this connection, the main result of this paper (Theorem \ref{theorem:main}) establishes that exponentially stable periodic orbits of the ROM produce E-ISS periodic orbits of the full-order hybrid dynamics. We elucidate our main theoretical results by applying them in a case study involving a 5-link planar biped and a variant of the canonical LIP (Sec. \ref{sec:biped}).

Compared to traditional HZD approaches \cite{westervelt2003hybrid,hereid2014dynamic, ames2017hybrid, reher2016realizing, reher2019dynamic}, our approach avoids computationally expensive offline trajectory optimization by directly generating a gait from a computationally efficient ROM. The limitation of our approach, compared to HZD, is that, in general, it produces suboptimal gaits that do not fully exploit the FOM dynamics. 

% Compared to approaches based on the approximate simulation framework \cite{kurtz2019formal,kurtz2020approximate}, our approach

% The work presented in this paper defers to the approach presented in \cite{kurtz2019formal} and \cite{kurtz2020approximate}

% This has simialr flaveor to Vince's stuff.. but rather than coming up with a sim function, we use HZD to formalize this template anchor appraoch. 

% \vspace{-5pt}
% The rest of this paper is structured as follows. In Section \ref{sec:preliminaries} we cover preliminaries related to hybrid systems, periodic orbits, and discrete-time stability. In Sec. \ref{sec:HZD_to_ROM}, we present the main technical results of this paper, where we formally establish the connections between ROMs and hybrid zero dynamics using an ISS framework. In Sec. \ref{sec:biped}, we walk through a case study that illustrates each aspect of our proposed framework as applied to bipedal locomotion with a LIP ROM. Finally, we conclude in Sec. \ref{sec:conclusion}.

%%%%%%%%%%%%%%%%%%%%%%%%%%%%%%%%%%%%%%%%%%%%%%%%%%%%%%%%%%%%%%%%%%%%%%%%%%%%%%%%
\section{Preliminaries} \label{sec:preliminaries}
%%%%%%%%%%%%%%%%%%%%%%%%%%%%%%%%%%%%%%%%%%%%%%%%%%%%%%%%%%%%%%%%%%%%%%%%%%%%%%%
\textbf{Notation:}
Given a manifold $\mc{M}$, we use $\mathsf{T}_{\bx}\mc{M}$ to denote the tangent space at $\bx\in\mc{M}$ and $\mathsf{T}\mathcal{M}$ to denote the tangent bundle. We write $\mc{M}\cong\mc{N}$ to denote that two sets $\mc{M}$ and $\mc{N}$ are diffeomorphic in that there exists of a diffeomorphism $\bPhi\,:\,\mc{M}\rightarrow\mc{N}$ (a smooth, invertible, bijective mapping, with smooth inverse) from $\mc{M}$ to $\mc{N}$. We use $\mc{B}_{\delta}(\bx^*)$ to denote an open ball of radius $\delta>0$ centered at $\bx^*$ and $\overline{\mc{B}_{\delta}(\bx^*)}$ to denote its closure. The Euclidean norm is denoted by $\|\cdot\|$ and $\|\cdot\|_{\mc{M}}$ denotes the point-to-set distance for a set $\mc{M}$. A manifold $\mc{M}$ is said to be invariant for an ODE $\dot{\bx}=\bf(\bx)$ defined by the vector field $\bf\,:\,\mc{M}\rightarrow\mathsf{T}\mc{M}$ if its solutions $t\mapsto \bx(t)$ satisfy $\bx(t)\in\mc{M}$ for all $t\in\R$. Similarly, a manifold $\mc{M}$ is said to be invariant for a difference equation $\bx^+=\bF(\bx)$ if $\bF(\bx)\in\mc{M}$ for all $\bx\in\mc{M}$. A continuous function $\alpha\,:\,\R_{\geq0}\rightarrow\R_{\geq0}$ is said to be a class $\mc{K}$ function ($\alpha\in\mc{K}$) if $\alpha(0)=0$ and $\alpha$ is strictly increasing.

%%%%%%%%%%%%%%%%%%%%%%%%%%%%%%%%%%%%%%%%%%%%%%%%%%%%%%%%%%%%%%%%%%%%%%%%%%%%%%%
\subsection{Full-Order Models and Hybrid Dynamics} \label{sec:fom_dynamics}
In this paper, we consider legged robots modeled as a hybrid system with impulse effects, which we will refer to as the \emph{full-order model} (FOM) of the system under consideration. 
% We consider legged robots with impulsive effects as the full order model (FOM). 
These systems have configuration
% The configuration coordinates of these models are of dimension $n\in\mathbb{N}$ such that the dynamics have configuration 
$\bq \in \mc{Q} \subseteq \R^n$ and state $\bx = (\bq, \dot{\bq}) \in \mc{X} \coloneqq \mathsf{T}\mc{Q} \subseteq \R^{2n}$. Using the Euler-Lagrange equations, we define the continuous dynamics as:
\begin{equation} \label{eq:robotics_eq}
    \bD(\bq)\ddot{\bq} + \bH(\bq,\dot{\bq}) = \bB\bu,
\end{equation}
where $\bD : \mc{Q} \rightarrow \R^{n\times n}$ is the positive definite mass-inertia matrix, $\bH : \mc{X} \rightarrow \R^n$ contains centrifugal, Coriolis, and gravitational terms, $\bB \in \R^{n \times m}$ is the actuation matrix, and $\bu \in \mc{U} \subseteq \R^m$ is the control input. In general, legged robots have unactuated coordinates in that $\mathrm{rank}(\bB) = m < n$, making the dynamics underactuated.
Furthermore, we may write these dynamics in state space form as:
\begin{equation}\label{eq:fom_continuous_dyn}
    \dot{\bx} = 
    \underbrace{
    \begin{bmatrix}
        \dot \bq \\
        -\mathbf{D}^{-1}(\bq) \mathbf{H}(\bq, \dot \bq)
    \end{bmatrix}
    }_{\bF(\bx)}
    +
    \underbrace{
    \begin{bmatrix}
        \mathbf{0} \\
        \mathbf{D}^{-1}(\bq) \mathbf{B}
    \end{bmatrix}
    }_{\bG(\bx)}
    \bu,
\end{equation}
where the drift 
$\bF\,:\,\mc{X} \rightarrow \R^{2n}$
% $\bF : \mc{X} \rightarrow \mc{X}$ 
and actuation 
$\bG : \mc{X} \rightarrow \R^{2n\times m}$ 
% $\bG : \mc{U} \rightarrow \mc{X}$ 
terms are 
% both assumed to be 
continuously differentiable. Additionally, under an appropriate switching surface $\mc{S} \subset \mc{X}$, typically defined by a foot striking the ground, the dynamics undergo discrete transitions which are represented by:
\begin{equation} \label{eq:fom_reset_map}
    \bx^+ = \boldsymbol{\Delta}(\bx^-),
\end{equation}
where  $\bm{\Delta}\,:\, \mc{S} \rightarrow \mc{X}$ is the locally Lipschitz reset map,
$\bx^{-}\in\mc{S}$ denotes the state just before impact and $\bx^{+}\in\mc{X}\setminus\mc{S}$ denotes the state just after impact.
Combining \eqref{eq:fom_continuous_dyn} and \eqref{eq:fom_reset_map}, the dynamics can be written as a hybrid control system:
\begin{equation} \label{eq:fom_HC}
        \mathcal{HC}
        =
        \left\{
        \begin{array}{ll}
              \dot {\bx} = \bF(\bx) + \bG(\bx) \bu & \text{if} \quad \bx \in \mathcal{X} \backslash \mc{S}\\
              \bx^{+} = \bm{\Delta}(\bx^{-})    & \text{if} \quad \bx^{-} \in \mc{S}.
        \end{array} 
        \right. 
\end{equation}
Given a locally Lipschitz feedback controller $\bk\,:\,\mc{X}\rightarrow\mc{U}$ we obtain the closed-loop hybrid system:
\begin{equation} \label{eq:fom_H}
        \mathcal{H}
        =
        \left\{
        \begin{array}{ll}
              \dot {\bx} = \bF(\bx) + \bG(\bx) \bk(\bx) & \text{if} \quad \bx \in \mathcal{X} \backslash \mc{S}\\
              \bx^{+} = \bm{\Delta}(\bx^{-})    & \text{if} \quad \bx^{-} \in \mc{S}.
        \end{array} 
        \right. 
\end{equation}
Let the flow of these dynamics be denoted by $\bm{\varphi}_{t}(\bx)$, which returns the state reached at time $t$ under the continuous dynamics \eqref{eq:fom_continuous_dyn} when starting from state $\bx\in\mc{X}$. 
% and satisfies:
% %
% \begin{equation} \label{eq:fom_flow}
%     \boldsymbol{\varphi}_{t}(\bx) = \bx + \int_{0}^{t} 
%     \left(\bF(\bm{\varphi}_{\tau}(\bx)) + \bG(\bm{\varphi}_{\tau}(\bx)) \bk(\bm{\varphi}_{\tau}(\bx)) \right) 
%     d\tau,
% \end{equation}
% %
% for all $t\in[0,T_{I}(\bx))$, 
To further characterize the discrete evolution of \eqref{eq:fom_H}, define the time-to-impact function $T_{I}\,:\,\tilde{\mc{S}}\rightarrow\R_{>0}$ as:
\begin{equation}\label{eq:fom-time-to-impact}
    T_I(\bx) \coloneqq \inf\{t \geq 0\; | \; \boldsymbol\varphi_t( \mathbf\Delta(\bx)) \in \mc{S}\},
\end{equation}
where $\tilde{\mc{S}} \coloneqq \{\bx\in\mc{S}\,|\,T_{I}(\bx)\in(0,\infty) \}$,
% \begin{equation}
    % \tilde{\mc{S}} \coloneqq \{\bx\in\mc{S}\,|\,T_{I}(\bx)\in(0,\infty) \},
% \end{equation}
which captures the time elapsed between discrete transitions.
The above formulation allows for casting the hybrid system \eqref{eq:fom_H} as a discrete-time system via the Poincar\'e map:
\begin{equation}\label{eq:fom_discrete_dyn}
    \bx_{k+1} = \bP(\bx_{k}),\, \bP(\bx) \coloneqq \bm{\varphi}_{T_{I}(\bx)}\big(\bm{\Delta}(\bx)\big),\, \bP\,:\,\tilde{\mc{S}}\rightarrow\mc{S}.
\end{equation}
% As is common the literature, we will often writ
% where $\bP\,:\,\tilde{\mc{S}}\rightarrow\mc{S}$ is a partial function, with more details available in \cite{ames2014rapidly,westervelt2003hybrid}.
% For legged systems, the parameter $\bv$, viewed now as an input to \eqref{eq:fom_discrete_dyn}, typically represents an admissible foot step location or landing angle, which ultimately influences the flow of the next continuous phase. The paradigm of stabilizing the full order model through discrete parameters is the idea of \textit{step-to-step stability}. 
As is common in the literature, we will often, with an abuse of notation, denote the Poincar\'e map as $\bP\,:\,\mc{S}\rightarrow\mc{S}$ with the understanding that $\bP$ is a partial function (see \cite{ames2014rapidly,westervelt2003hybrid}).
The Poincar\'e map provides a powerful tool for studying the stability of orbits of \eqref{eq:fom_H}. The flow of \eqref{eq:fom_H} is said to be periodic with period $T\geq0$ if there exists a state $\bx^*\in\mc{S}$ such that $\flow_T(\bm{\Delta}(\bx^*))=\bx^*$. This induces a periodic orbit:
\begin{equation}\label{eq:hybrid-periodic-orbit}
    \mc{O} \coloneqq \{ \flow_t(\bm{\Delta}(\bx^*))\,|\,t\in[0,T],\,T=T_{I}(\bx^*)\}.
\end{equation}
% The following theorem captures the main result with regard to hybrid periodic orbits.
Importantly, stability of $\mc{O}$ is equivalent to that of $\bP$ \cite{westervelt2018feedback}.
% \begin{theorem}[\cite{westervelt2003hybrid}]\label{theorem:exp-orbits}
%     The periodic orbit $\mc{O}$ in \eqref{eq:hybrid-periodic-orbit} is locally exponentially stable for \eqref{eq:fom_H} if and only if $\bx^*$ is a locally exponentially stable fixed point of the Poincar\'e map \eqref{eq:fom_discrete_dyn}.
% \end{theorem}

%%%%%%%%%%%%%%%%%%%%%%%%%%%%%%%%%%%%%%%%%%%%%%%%%%%%%%%%%%%%%%%%%%%%%%%%%%%%%%%
\subsection{Actuation Decompositions} \label{sec:fom_diffeomorphism}
In this paper, we analyze the stability properties of hybrid systems by decomposing their dynamics into actuated and unactuated components. To this end, let $\bPhi\,:\,\mc{X}\rightarrow\mc{N}\times\mc{Z}$ be a diffeomorphism such that:
\begin{equation}
    \begin{bmatrix}
        \bm{\eta} \\ \bz
    \end{bmatrix}
    \coloneqq 
    \begin{bmatrix}
        \bPhi_{\bm{\eta}}(\bx) \\ \bPhi_{\bz}(\bx)
    \end{bmatrix}
    =
    \bPhi(\bx),
\end{equation}
where $\bm{\eta}\in\mc{N}$, $\dim(\mc{N})\eqqcolon N$, captures the actuated components of the state (e.g., the robot's joint positions and velocities) and $\bz\in\mc{Z}$, $\dim(\mc{Z})\eqqcolon Z$, captures the unactuated components of the state (e.g., the robot's center of mass position and velocity), and $N+Z=2n$. Assuming that $\pdv{\bPhi_{\bz}}{\bx}\bG(\bx)\equiv \bzero$, this coordinate transformation decomposes \eqref{eq:fom_HC} into actuated and unactuated dynamics:
\begin{equation} \label{eq:fom_HC_normal_form}
        \mathcal{HC}
        =
        \left\{
        \begin{array}{ll}
              \dot {\bm{\eta}} = \hat{\bf}(\bm{\eta},\bz) + \hat{\bg}(\bm{\eta},\bz) \bu & \text{if} \quad \bPhi^{-1}(\bm{\eta},\bz) \in \mathcal{X} \backslash \mc{S}\\
              \dot{\bz} = \bm{\omega}(\bm{\eta},\bz) & \text{if} \quad \bPhi^{-1}(\bm{\eta},\bz) \in \mathcal{X} \backslash \mc{S}\\
              \bm{\eta}^{+} = \bm{\Delta}_{\bm{\eta}}(\bm{\eta}^{-}, \bz^{-})    & \text{if} \quad \bPhi^{-1}(\bm{\eta}^-,\bz^-) \in \mc{S} \\
               \bm{\bz}^{+} = \bm{\Delta}_{\bm{\bz}}(\bm{\eta}^{-}, \bz^{-})    & \text{if} \quad \bPhi^{-1}(\bm{\eta}^-,\bz^-) \in \mc{S}.
        \end{array} 
        \right. 
\end{equation}
% where:
% \begin{equation*}
%     \begin{aligned}
%     \hat{\bf}(\bm{\eta},\bz) \coloneqq &\pdv{\bPhi_{\bm{\eta}}}{\bx}\bF(\bPhi^{-1}(\bm{\eta},\bz) ) \\ 
%     \hat{\bg}(\bm{\eta},\bz) \coloneqq &\pdv{\bPhi_{\bm{\eta}}}{\bx}\bG(\bPhi^{-1}(\bm{\eta},\bz) ) \\
%     \bm{\omega}(\bm{\eta},\bz) \coloneqq &\pdv{\bPhi_{\bz}}{\bx}\bF(\bPhi^{-1}(\bm{\eta},\bz) ) \\ 
%     \bm{\Delta}_{\bm{\eta}}(\bm{\eta}, \bz) \coloneqq & \bPhi_{\bm{\eta}}(\bm{\Delta}(\bPhi^{-1}(\bm{\eta},\bz))) \\
%     \bm{\Delta}_{\bz}(\bm{\eta}, \bz) \coloneqq & \bPhi_{\bz}(\bm{\Delta}(\bPhi^{-1}(\bm{\eta},\bz))). \\
%     \end{aligned}
% \end{equation*}
A common coordinate transformation for \eqref{eq:fom_continuous_dyn} that achieves this decomposition is $\bPhi(\bx)\coloneqq(\bPhi_{\bm{\eta}}(\bx),\bPhi_{\bz}(\bx))$ where:
\begin{align}
    \bm{\eta} =
    \begin{bmatrix}
        \boldsymbol \eta_1 \\
        \boldsymbol \eta_2
    \end{bmatrix}
    \coloneqq
    \mathbf{\Phi}_{\bm{\eta}} (\bx) \coloneqq &
    \begin{bmatrix}
        \bB^{\top} \bq \\
        \bB^{\top} \dot{\bq}
    \end{bmatrix}
    , \label{eq:eta_coordinates}
    \\
    \bz =
    \begin{bmatrix}
        \bz_1 \\
        \bz_2
    \end{bmatrix}
    \coloneqq 
    \mathbf{\Phi}_{\bz} (\bx) \coloneqq &
    \begin{bmatrix}
        \bN \bq \\
        \bN \bD(\bq) \dot{\bq}
    \end{bmatrix},
    \label{eq:z_coordinates}
\end{align}
where
$\bN$ is a left annihilator of $\bB$ in the sense that $\bN\bB=\bzero$.

Given a locally Lipschitz feedback controller $\hat{\bk}\,:\,\mc{N}\times\mc{Z}\rightarrow\mc{U}$, we may also consider the closed-loop hybrid system $\mc{H}$ by taking $\bu=\hat{\bk}(\bm{\eta},\bz)$ in \eqref{eq:fom_HC_normal_form}, which may be used to define a corresponding Poincar\'e map as in \eqref{eq:fom_discrete_dyn} to study stability properties of the corresponding hybrid system.

\subsection{Lyapunov Stability} \label{sec:lyapunov_background}

Powerful statements regarding stability of hybrid systems can be made based purely on properties of the Poincar\'e map \eqref{eq:fom_discrete_dyn}, a discrete-time dynamical system. We now review notions of stability for discrete-time dynamical systems. 
% We begin with the definition of exponential stability.
\begin{definition}
    A fixed-point $\bx^*=\bP(\bx^*)$ of \eqref{eq:fom_discrete_dyn} is said to be locally exponentially stable if there exist $M>0$, $\alpha\in(0,1)$, and $\delta>0$ such that for all $\bx_0\in\mc{B}_{\delta}(\bx^*)\cap\mc{S}$:
    \begin{equation}
        \|\bx_{k} - \bx^*\| \leq M\alpha^{k}\|\bx_0 - \bx^*\|,\quad \forall k\in\mathbb{N}.
    \end{equation}
\end{definition}
In this paper, we leverage approximations of the Poincar\'e map to make statements regarding stability of the full-order dynamics. Here, the approximation error  -- which, as discussed later, represents the difference between the ``true" reduced-order dynamics and the ``desired" reduced-order dynamics encoded by a ROM -- may be viewed as a disturbance input, yielding the perturbed dynamics:
\begin{equation}\label{eq:perturbed-poincare-map}
    \bx_{k+1} = \bP(\bx_k,\bd_k),
\end{equation}
where $\bd\in\mc{D}$ is a disturbance input. More details on perturbed Poincar\'e maps can be found in \cite{PoulakakisTAC19}. The notion of input-to-state stability (ISS) provides a framework for analyzing the properties of such perturbed systems \cite{JiangAutomatica01,JiangSCL02,PoulakakisTAC19}.
\begin{definition}[\cite{PoulakakisTAC19}]
    The perturbed Poincar\'e map \eqref{eq:perturbed-poincare-map} is said to be locally exponentially input-to-state stable (E-ISS) if there exist $M>0$, $\alpha\in(0,1)$, $\gamma\in\mc{K}$, and $\delta>0$ such that $\|\bd\|_{\infty}\leq \delta$ and for all $\bx_0\in\mc{B}_{\delta}(\bx^*)\cap\mc{S}$:
    \begin{equation}\label{eq:E-ISS}
        \|\bx_{k} - \bx^*\| \leq M\alpha^{k}\|\bx_0 - \bx^*\| + \gamma\left(\|\bd\|_{\infty}\right),\quad \forall k\in\mathbb{N},
    \end{equation}
    where $\|\bd\|_{\infty}\coloneqq \sup_{k\in\mathbb{N}}\|\bd_k\|$.
\end{definition}
\begin{definition}[\cite{PoulakakisTAC19}]
    A continuous function $V\,:\,\mc{S}\rightarrow\R_{\geq0}$ is said to be a local E-ISS Lyapunov function for \eqref{eq:perturbed-poincare-map} if there exist $c_1,c_2,\delta>0$, $c_3\in(0,1)$, and $\sigma\in\mc{K}$ such that for all $\bx\in\mc{B}_{\delta}(\bx^*)\cap\mc{S}$ and all $\|\bd\|\leq\delta$:
    \begin{equation}
        c_1\|\bx - \bx^*\|^2 \leq V(\bx) \leq c_2 \|\bx - \bx^*\|^2
    \end{equation}
    \begin{equation}
        V(\bP(\bx,\bd)) - V(\bx) \leq -c_3V(\bx) + \sigma(\|\bd\|).
    \end{equation}
\end{definition}
The existence of E-ISS Lyapunov functions are necessary and sufficient
%\footnote{While results in \cite{JiangAutomatica01,PoulakakisTAC19} only discuss ISS, rather than E-ISS, they can extended in a straightforward way to the E-ISS setting.} 
for E-ISS of discrete-time systems \cite{JiangAutomatica01}. While we will leverage the notion of E-ISS to characterize the discrete-time dynamics of hybrid systems via the Poincar\'e map, we will use the notion of a \emph{rapidly} exponentially stabilizing control Lyapunov function \cite{ames2014rapidly} to ensure that continuous stability is preserved under impacts. 
\begin{definition}[\cite{ames2014rapidly}]
    A continuously differentiable function $V_{\varepsilon}\,:\,\mc{X}\rightarrow\R_{\geq0}$ is said to be a \emph{rapidly exponentially stabilizing control Lyapunov function} (RES-CLF) for \eqref{eq:fom_continuous_dyn} with respect to a set $\mc{M}\subset\mc{X}$ if there exist constants $c_1,c_2,c_3>0$ and $\varepsilon\in(0,1)$ such that for all $\bx\in\mc{X}$:
    \begin{equation}
        c_1\|\bx\|_{\mc{M}}^2 \leq V_{\varepsilon}(\bx) \leq \frac{c_2}{\varepsilon^2}\|\bx\|_{\mc{M}}^2
    \end{equation}
    \begin{equation}\label{eq:RES-CLF}
        \inf_{\bu\in\mc{U}}\left\{\pdv{V_{\varepsilon}}{\bx}(\bx)\bF(\bx) + \pdv{V_{\varepsilon}}{\bx}(\bx)\bG(\bx)\bu \right\} \leq - \frac{c_3}{\varepsilon}V_{\varepsilon}(\bx).
    \end{equation}
\end{definition}
As shown in \cite{ames2014rapidly}, any controller satisfying the condition in \eqref{eq:RES-CLF} exponentially stabilizes the continuous dynamics to $\mc{M}$ at a rate determined by $\varepsilon$. In the context of bipedal locomotion, RES-CLFs can often be synthesized using input-output linearization \cite{ames2014rapidly}.
\section{From Hybrid Zero Dynamics to Reduced-Order Models} 
\label{sec:HZD_to_ROM}
\subsection{Hybrid Zero Dynamics}
\label{sec:HZD}
The main objective of this paper is to stabilize a full-order hybrid system \eqref{eq:fom_HC_normal_form} by relating desired behaviors of the unactuated dynamics to those of the actuated dynamics via ROMs. To this end, consider the set:
\begin{equation}\label{eq:zero-dyn-manifold}
    \begin{aligned}
        \mc{M} \coloneqq & \{(\bm{\eta}, \bz)\in\mc{N}\times\mc{Z}\,|\,\bh(\bm{\eta},\bz) = \bm{\eta}-\bpsi(\bz)= \bzero\}, \\ 
    \end{aligned}
\end{equation}
where $\bpsi\,:\,\mc{Z}\rightarrow\mc{N}$ is a continuously differentiable mapping relating the underactuated states to the actuated states. As noted in \cite{csomay2024robust,compton2024constructive}, this paradigm of relating unactuated states to actuated ones is central in locomotion techniques that leverage ROMs. Since $\bh\,:\,\mc{N}\times\mc{Z}\rightarrow\R^N$ and 
% $D\bh(\bm{\eta},\bz)=\left[\bI_N\;-\pdv{\bpsi}{\bz}(\bz)\right]$ 
$[\pdv{\bh}{\bm{\eta}}\;-\pdv{\bh}{\bz}]=[\bI_N\;-\pdv{\bpsi}{\bz}(\bz)]$ 
has rank $N$, $\mc{M}$ is a differentiable manifold of dimension $2n - N=Z$. To relate the behaviors of \eqref{eq:fom_HC_normal_form} to a ROM, we must restrict the evolution of this hybrid system to a lower-dimensional surface. The following result shows that for the decomposition in \eqref{eq:eta_coordinates} and \eqref{eq:z_coordinates} and particular classes $\bpsi$, it is possible to render \eqref{eq:zero-dyn-manifold} invariant.
\begin{lemma}\label{lemma:continuous-invariance}
    Consider the hybrid control system \eqref{eq:fom_HC_normal_form} obtained using the actuation decomposition from \eqref{eq:eta_coordinates} and \eqref{eq:z_coordinates}. Define $\bpsi(\bz)\coloneqq (\bpsi_{1}(\bz),\bpsi_{2}(\bz))$, where $\bpsi_2(\bz)\coloneqq\pdv{\bpsi_1}{\bz}(\bz)\bm{\omega}(\bm{\eta},\bz)$ for all $\bm{\eta}\in\mc{N}$ is independent of $\bm{\eta}$. Provided, $\rank(\bB)=m$ and $\mc{U}=\R^m$, the manifold $\mc{M}$ defined as in \eqref{eq:zero-dyn-manifold} is controlled invariant for the continuous dynamics of \eqref{eq:fom_HC_normal_form}.
    % The manifold $\mc{M}$ from \eqref{eq:zero-dyn-manifold} is controlled invariant for the continuous dynamics of the hybrid system \eqref{eq:fom_HC_normal_form} if $\mc{U}=\R^m$ and for all $\bz\in\mc{Z}$:
    % \begin{equation}\label{eq:rank-g-hat}
    %     \rank(\hat{\bg}(\bpsi(\bz), \bz)) = m.
    % \end{equation}
    In particular, the controller $\hat{\bk}(\bm{\eta},\bz)\coloneqq \bk(\bPhi^{-1}(\bm{\eta},\bz))$, where:
    \begin{equation}\label{eq:u-M-invariant}
    \begin{aligned}
        % \bu = & (\bB\T\bD^{-1}(\bq)\bB)^{-1}\big[\bB\T\bD^{-1}(\bq)\bH(\bPhi^{-1}(\bpsi(\bz), \bz)) \\ & +\pdv{\bpsi_2}{\bz}\bm{\omega}(\bpsi(\bz),\bz)  \big]
        \bk(\bx) \coloneqq & (\bB\T\bD^{-1}(\bq)\bB)^{-1}\big[\bB\T\bD^{-1}(\bq)\bH(\bx)  + \dot{\bm{\psi}}_2(\bPhi(\bx))  \big]
    \end{aligned}
    \end{equation}
    % \begin{equation}\label{eq:u-M-invariant}
    %     \bu = \hat{\bg}(\bpsi(\bz),\bz)^\dagger\left[ \pdv{\bpsi}{\bz}(\bz)\bm{\omega}(\bpsi(\bz),\bz) - \hat{\bf}(\bpsi(\bz),\bz) \right],
    % \end{equation}
    renders $\mc{M}$ invariant for the continuous dynamics of \eqref{eq:fom_HC_normal_form}.
\end{lemma}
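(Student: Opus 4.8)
The plan is to establish controlled invariance by directly exhibiting a feedback that renders $\mc{M}$ invariant, namely the $\bk$ in \eqref{eq:u-M-invariant}. The governing principle is that a level set $\mc{M}=\{\bh=\bzero\}$ is invariant for the closed-loop continuous dynamics precisely when the closed-loop vector field is tangent to $\mc{M}$, i.e. when $\dot{\bh}=\bzero$ at every point where $\bh=\bzero$. Since $\bh(\bm{\eta},\bz)=\bm{\eta}-\bpsi(\bz)$, differentiating along \eqref{eq:fom_HC_normal_form} and using $\dot{\bz}=\bm{\omega}(\bm{\eta},\bz)$ gives $\dot{\bh}=\dot{\bm{\eta}}-\pdv{\bpsi}{\bz}(\bz)\bm{\omega}(\bm{\eta},\bz)$, so the whole argument reduces to computing this derivative block-by-block and checking when it vanishes.

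Next I would exploit the relative-degree structure induced by the coordinates \eqref{eq:eta_coordinates}. Writing $\bh=(\bh_1,\bh_2)$ with $\bh_1=\bm{\eta}_1-\bpsi_1$ and $\bh_2=\bm{\eta}_2-\bpsi_2$, the position-level block satisfies $\dot{\bm{\eta}}_1=\bB\T\dot{\bq}=\bm{\eta}_2$, so $\dot{\bh}_1=\bm{\eta}_2-\pdv{\bpsi_1}{\bz}(\bz)\bm{\omega}(\bm{\eta},\bz)$, which carries no control input. The key observation is that the definition $\bpsi_2(\bz)\coloneqq\pdv{\bpsi_1}{\bz}(\bz)\bm{\omega}(\bm{\eta},\bz)$ is engineered exactly so that on $\mc{M}$, where $\bm{\eta}_2=\bpsi_2(\bz)$, we obtain $\dot{\bh}_1=\bpsi_2(\bz)-\pdv{\bpsi_1}{\bz}(\bz)\bm{\omega}(\bm{\eta},\bz)=\bzero$ automatically, with no control required. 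This is where the hypothesis that $\pdv{\bpsi_1}{\bz}\bm{\omega}$ be independent of $\bm{\eta}$ is essential: it guarantees that $\bpsi_2$, and hence $\bpsi$ and $\pdv{\bpsi}{\bz}$, are genuine functions of $\bz$ alone, so that $\mc{M}$ and the computation above are well posed.

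For the velocity-level block I would substitute the full-order dynamics. Using $\ddot{\bq}=\bD^{-1}(\bq)(\bB\bu-\bH)$ yields $\dot{\bm{\eta}}_2=\bB\T\bD^{-1}(\bq)\bB\,\bu-\bB\T\bD^{-1}(\bq)\bH$, whence $\dot{\bh}_2=\bB\T\bD^{-1}(\bq)\bB\,\bu-\bB\T\bD^{-1}(\bq)\bH-\dot{\bpsi}_2$, where $\dot{\bpsi}_2=\pdv{\bpsi_2}{\bz}(\bz)\bm{\omega}(\bm{\eta},\bz)$ matches the notation $\dot{\bm{\psi}}_2(\bPhi(\bx))$ in \eqref{eq:u-M-invariant}. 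Setting $\dot{\bh}_2=\bzero$ and solving for $\bu$ reproduces \eqref{eq:u-M-invariant} verbatim. The one genuine technical point — and the step I expect to be the crux — is justifying that the decoupling matrix $\bB\T\bD^{-1}(\bq)\bB$ is invertible: this follows because $\bD(\bq)\succ 0$ implies $\bD^{-1}(\bq)\succ 0$, and $\rank(\bB)=m$ then makes $\bB\T\bD^{-1}(\bq)\bB$ symmetric positive definite, hence nonsingular on all of $\mc{Q}$. Finally, since $\mc{U}=\R^m$ the constructed $\bk$ is an admissible input, and by construction it forces $\dot{\bh}=\bzero$ on $\mc{M}$; this simultaneously establishes that $\mc{M}$ is controlled invariant and that the specific feedback \eqref{eq:u-M-invariant} renders it invariant for the closed-loop continuous dynamics.
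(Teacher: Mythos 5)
Your proposal is correct and follows essentially the same route as the paper's proof: reduce invariance to the tangency condition $\dot{\bh}=\bzero$ on $\mc{M}$, compute $\dot{\bh}$ block-wise so that the position-level block vanishes on $\mc{M}$ by the very definition of $\bpsi_2$, and solve the velocity-level block for $\bu$ to recover \eqref{eq:u-M-invariant}. Your only addition is to spell out why $\bB\T\bD^{-1}(\bq)\bB$ is invertible (positive definiteness of $\bD$ plus $\rank(\bB)=m$), a detail the paper leaves implicit in the phrase ``which exists since $\rank(\bB)=m$ and $\mc{U}=\R^m$.''
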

\begin{proof}
    The set $\mc{M}$ is invariant for the continuous portion of the hybrid dynamics if and only if for each $(\bm{\eta},\bz)\in\mc{M}$ there exists an input $\bu\in\mc{U}$ such that:
    \begin{equation}\label{eq:TM}
        \begin{bmatrix}
            \hat{\bf}(\bm{\eta},\bz) + \hat{\bg}(\bm{\eta},\bz) \bu \\ \bm{\omega}(\bm{\eta},\bz)
        \end{bmatrix}
        \in \mathsf{T}_{(\bm{\eta},\bz)}\mc{M}.
    \end{equation}
    Since $\mc{M}$ is defined as the zero level set of a differentiable function, \eqref{eq:TM} is equivalent to the requirement that for each $(\bm{\eta},\bz)\in\mc{M}$ there exists an input $\bu\in\mc{U}$ such that $\dot{\bh}(\bm{\eta},\bz,\bu)=\bzero$. Given the form of $\bpsi$ and $\bm{\eta}$, we have:
    \begin{equation*}
        \begin{aligned}
            \dot{\bh} = &
        \begin{bmatrix}
            \dot{\bm{\eta}}_1 - \pdv{\bpsi_1}{\bz}(\bz)\bm{\omega}(\bm{\eta},\bz) \\ 
            \dot{\bm{\eta}}_2 - \pdv{\bpsi_2}{\bz}(\bz)\bm{\omega}(\bm{\eta},\bz) 
        \end{bmatrix} \\
        = &
        \begin{bmatrix}
            \bm{\eta}_2 - \bpsi_2(\bz) \\ \bB\T\bD^{-1}\bB\bu - \bB\T\bD^{-1}\bH(\bPhi^{-1}(\bm{\eta}, \bz)) - \dot{\bpsi}_2(\bm{\eta},\bz) 
        \end{bmatrix}.
        \end{aligned}
    \end{equation*}
    Further, since $\bm{\eta}=\bpsi(\bz)$ on $\mc{M}$,
    one may verify that, with $\bu=\hat{\bk}(\bm{\eta},\bz)$, which exists since $\rank(\bB)=m$ and $\mc{U}=\R^m$, we have $\dot{\bh}(\bm{\psi}(\bz),\bz,\bu)=\bzero$, 
    % we have:
    % \begin{equation*}
    %     \dot{\bh} = \begin{bmatrix}
    %         \bzero \\ \bB\T\bD^{-1}\bB\bu - \bB\T\bD^{-1}\bH(\bPhi^{-1}(\bpsi(\bz), \bz)) - \dot{\bpsi}_2(\bm{\psi}(\bz),\bz) 
    %     \end{bmatrix}
    % \end{equation*}
    % on $\mc{M}$. Since $\rank(\bB)=m$ and $\mc{U}=\R^m$, we may pick $\bu$ as in \eqref{eq:u-M-invariant},
    % % \begin{equation*}
    % %     \bu = (\bB\T\bB)^{-1}\left[\bB\T\bH(\bPhi^{-1}(\bpsi(\bz), \bz)) +\pdv{\bpsi_2}{\bz}\bm{\omega}(\bpsi(\bz),\bz)  \right]
    % % \end{equation*}
    % which ensures that $\dot{\bh}(\bpsi(\bz),\bz,\bu(\bz))=\bzero$ for all $\bz$,
    implying invariance of $\mc{M}$.
    % this can be further simplified to $\dot{h}(\bpsi(\bz), \bz, \bu)=\bzero$ for all $\bz\in\mc{Z}$. Since:
    % \begin{equation*}
    %     \dot{h}(\bpsi(\bz), \bz, \bu) = \hat{\bf}(\bpsi(\bz),\bz) + \hat{\bg}(\bpsi(\bz),\bz) \bu - \pdv{\bpsi}{\bz}\bm{\omega}(\bpsi(\bz),\bz),
    % \end{equation*}
    % the required invariance condition is achieved with $\bu$ in \eqref{eq:u-M-invariant}, which exists since $\mc{U}=\R^m$ and \eqref{eq:rank-g-hat} holds. 
    % Thus, for each $(\bm{\eta},\bz)\in\mc{M}$, \eqref{eq:u-M-invariant} satisfies \eqref{eq:TM}, as desired. 
\end{proof}
% \textcolor{blue}{Solid. Should we have a definition for \textit{controlled invariant} or like \textit{forward invariance}?}
Although the above result guarantees the existence of inputs that render $\mc{M}$ invariant\footnote{Under the same assumptions as Lemma \ref{lemma:continuous-invariance}, one may also establish the existence of an RES-CLF for $\mc{M}$ using input-output linearization as in \cite{ames2014rapidly}.} for the continuous dynamics in \eqref{eq:fom_HC_normal_form}, discrete transitions may destroy this invariance property for the overall hybrid system. The following result provides conditions on $\bpsi$ that preserve invariance of $\mc{M}$ under resets.
\begin{lemma}\label{lemma:discrete-invariance}
    The manifold $\mc{M}$ from \eqref{eq:zero-dyn-manifold} is invariant for the discrete dynamics of the hybrid system \eqref{eq:fom_HC_normal_form} if for all $\bz\in\mc{Z}$:
    \begin{equation}\label{eq:discrete-invariance}
    \bm{\Delta}_{\bm{\eta}}(\bpsi(\bz), \bz) = \bpsi(\bm{\Delta}_{\bz}(\bpsi(\bz), \bz)).
\end{equation}
\end{lemma}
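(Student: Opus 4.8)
The plan is to unwind the definition of discrete invariance given in the Notation section and show that the hypothesis \eqref{eq:discrete-invariance} is precisely the condition forcing the reset image of any point on $\mc{M}$ back onto $\mc{M}$. Recall that a manifold is invariant for a difference equation if the post-impact state lies in the manifold whenever the pre-impact state does. Since $\mc{M}$ is exactly the graph of $\bpsi$, namely $(\bm{\eta},\bz)\in\mc{M}$ if and only if $\bh(\bm{\eta},\bz)=\bm{\eta}-\bpsi(\bz)=\bzero$, I would first take an arbitrary pre-impact point $(\bm{\eta}^-,\bz^-)\in\mc{M}$ and write $\bm{\eta}^-=\bpsi(\bz^-)$.

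Next I would apply the two reset maps from \eqref{eq:fom_HC_normal_form} to obtain the post-impact state $\bm{\eta}^+=\bm{\Delta}_{\bm{\eta}}(\bpsi(\bz^-),\bz^-)$ and $\bz^+=\bm{\Delta}_{\bz}(\bpsi(\bz^-),\bz^-)$, where the dependence on $\bm{\eta}^-$ has been replaced by its value $\bpsi(\bz^-)$ on $\mc{M}$. The point $(\bm{\eta}^+,\bz^+)$ belongs to $\mc{M}$ precisely when $\bm{\eta}^+=\bpsi(\bz^+)$. Substituting the two post-impact expressions into this requirement yields exactly $\bm{\Delta}_{\bm{\eta}}(\bpsi(\bz^-),\bz^-)=\bpsi\big(\bm{\Delta}_{\bz}(\bpsi(\bz^-),\bz^-)\big)$, which is the hypothesis \eqref{eq:discrete-invariance} evaluated at $\bz=\bz^-$. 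Since $\bz^-\in\mc{Z}$ was arbitrary, the condition holds at every pre-impact point on $\mc{M}$, establishing discrete invariance.

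Honestly, there is no substantive obstacle here: the result is essentially definitional, and the hypothesis \eqref{eq:discrete-invariance} is a direct transcription of the invariance requirement specialized to the graph of $\bpsi$. The only points requiring minor care are (i) confirming that the reset is evaluated at a pre-impact point that genuinely lies in the switching-surface portion of $\mc{M}$, so that the post-impact state is well-defined and lands in $\mc{N}\times\mc{Z}$, and (ii) noting that because both $\bm{\Delta}_{\bm{\eta}}$ and $\bm{\Delta}_{\bz}$ depend on $\bm{\eta}^-$ only through its value $\bpsi(\bz^-)$ on $\mc{M}$, the invariance condition collapses to a statement purely in $\bz$, matching the form of \eqref{eq:discrete-invariance}. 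I would close with a brief remark that this condition is the discrete counterpart to the controlled invariance from Lemma \ref{lemma:continuous-invariance}, so that together they render $\mc{M}$ hybrid invariant and thus give rise to hybrid zero dynamics.
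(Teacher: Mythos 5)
Your proposal is correct and follows essentially the same route as the paper's proof: both unwind the definition of discrete invariance to the condition $\bm{\Delta}_{\bm{\eta}}(\bm{\eta},\bz) = \bpsi(\bm{\Delta}_{\bz}(\bm{\eta},\bz))$ on $\mc{M}$, then substitute $\bm{\eta}=\bpsi(\bz)$ to see that hypothesis \eqref{eq:discrete-invariance} is exactly this condition restricted to the graph of $\bpsi$. Your additional remarks on well-definedness of the reset at pre-impact points are a minor refinement the paper leaves implicit, but they do not change the argument.
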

\begin{proof}
    To show discrete invariance, we must have $(\bm{\Delta}_{\bm{\eta}}(\bm{\eta},\bz),\bm{\Delta}_{\bz}(\bm{\eta},\bz))\in\mc{M}$
    % \begin{equation*}
    %     \begin{bmatrix}
    %         \bm{\Delta}_{\bm{\eta}}(\bm{\eta},\bz) \\ 
    %         \bm{\Delta}_{\bz}(\bm{\eta},\bz)
    %     \end{bmatrix}
    %     \in \mc{M}
    % \end{equation*}
    for all $(\bm{\eta},\bz)\in\mc{M}$, which, based on \eqref{eq:zero-dyn-manifold} is equivalent to:
    \begin{equation}\label{eq:discrete-invariance-2}
        \bm{\Delta}_{\bm{\eta}}(\bm{\eta},\bz) = \bm{\psi}(\bm{\Delta}_{\bz}(\bm{\eta},\bz)),\quad (\bm{\eta},\bz)\in\mc{M}.
    \end{equation}
    % for all $(\bm{\eta},\bz)\in\mc{M}$. 
    Since $\bm{\eta}=\bpsi(\bz)$ on $\mc{M}$, \eqref{eq:discrete-invariance} implies that \eqref{eq:discrete-invariance-2} holds.
\end{proof}
Putting Lemma \ref{lemma:continuous-invariance} and Lemma \ref{lemma:discrete-invariance} together, we have the following result.
% \textcolor{blue}{Solid invariance condition. Should we pose it as a set condition (as in $\Delta(S\cap Z) \subset Z$) rather than a point condition with $\forall z \in  Z$?}
\begin{theorem}
    Let the conditions of Lemma \ref{lemma:continuous-invariance} and Lemma \ref{lemma:discrete-invariance} hold. Then, 
    the feedback controller $\hat{\bk}$ from Lemma \ref{lemma:continuous-invariance}
    % any locally Lipschitz feedback controller $\hat{\bk}\,:\,\mc{N}\times\mc{Z}\rightarrow\mc{U}$ such that $\hat{\bk}|_{\mc{M}}(\bm{\eta},\bz)$, $\hat{\bk}$ restricted to $\mc{M}$, is equal to \eqref{eq:u-M-invariant}
    renders the manifold $\mc{M}$ from \eqref{eq:zero-dyn-manifold} hybrid invariant for the hybrid closed-loop system. 
\end{theorem}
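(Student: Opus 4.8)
The plan is to recognize that \emph{hybrid invariance} of $\mc{M}$ decomposes into exactly the two properties already established in the preceding lemmas: invariance of $\mc{M}$ under the closed-loop continuous flow, and invariance of $\mc{M}$ under the discrete reset map. Since any execution of the hybrid system \eqref{eq:fom_HC_normal_form} alternates between continuous flows and discrete jumps, it suffices to verify that each type of transition maps $\mc{M}$ into itself, and then chain these two facts together along an arbitrary execution by induction.

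First I would fix the closed-loop system obtained by substituting $\hat{\bk}$ from Lemma \ref{lemma:continuous-invariance} into \eqref{eq:fom_HC_normal_form}, and take an arbitrary initial condition $(\bm{\eta}_0,\bz_0)\in\mc{M}$ with $\bPhi^{-1}(\bm{\eta}_0,\bz_0)\in\mc{X}\setminus\mc{S}$. By Lemma \ref{lemma:continuous-invariance}, the controller $\hat{\bk}$ renders $\mc{M}$ invariant for the continuous dynamics, so the closed-loop solution satisfies $(\bm{\eta}(t),\bz(t))\in\mc{M}$ for all $t$ up to and including the first time $T_{I}$ at which the trajectory reaches $\mc{S}$. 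In particular, the pre-impact state $(\bm{\eta}^-,\bz^-)$ lies in $\mc{M}\cap\mc{S}$, which is precisely the hypothesis required to invoke the discrete result.

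Next I would apply Lemma \ref{lemma:discrete-invariance}: since $(\bm{\eta}^-,\bz^-)\in\mc{M}$ so that $\bm{\eta}^-=\bpsi(\bz^-)$, condition \eqref{eq:discrete-invariance} guarantees that the post-impact state $(\bm{\Delta}_{\bm{\eta}}(\bm{\eta}^-,\bz^-),\bm{\Delta}_{\bz}(\bm{\eta}^-,\bz^-))$ again satisfies $\bm{\eta}^+=\bpsi(\bz^+)$, i.e. $(\bm{\eta}^+,\bz^+)\in\mc{M}$. Because the reset map sends states into $\mc{X}\setminus\mc{S}$, the continuous phase resumes from a point of $\mc{M}$, and Lemma \ref{lemma:continuous-invariance} applies once more. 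Proceeding by induction on the (possibly infinite) sequence of discrete transitions, every continuous arc keeps the state in $\mc{M}$ and every reset returns it to $\mc{M}$, so the entire execution remains in $\mc{M}$; this is exactly the statement of hybrid invariance.

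I do not anticipate a genuine obstacle, as the result is essentially a corollary that stitches the two lemmas together. The only point requiring care is the handoff at the switching surface: one must confirm that continuous invariance places the pre-impact state in $\mc{M}\cap\mc{S}$ so the hypothesis of Lemma \ref{lemma:discrete-invariance} is met, and conversely that the post-impact state lies in $\mc{M}\cap(\mc{X}\setminus\mc{S})$ so the flow-invariance argument can be restarted. Tracking this alternation cleanly through the inductive argument is the whole content of the proof.
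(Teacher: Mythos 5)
Your proposal is correct and follows essentially the same route as the paper: the paper's proof likewise invokes Lemma \ref{lemma:continuous-invariance} for the continuous arcs and Lemma \ref{lemma:discrete-invariance} for the resets, concluding hybrid invariance by alternation (it simply leaves the inductive chaining implicit, which you spell out explicitly). Your added care at the handoff between flow and reset is a faithful elaboration of the same argument, not a different one.
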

\begin{proof}
    Hybrid invariance requires trajectories of the hybrid system starting on $\mc{M}$ to remain on $\mc{M}$ for all time. Lemma \ref{lemma:continuous-invariance} ensures that trajectories of the hybrid system remain on $\mc{M}$ during continuous evolution, whereas Lemma \ref{lemma:discrete-invariance} ensures that trajectories remain on $\mc{M}$ after resets, thereby ensuring hybrid invariance, as desired.
\end{proof}
When the manifold $\mc{M}$ is hybrid invariant, the full-order hybrid system evolves on a lower-dimensional surface, $\mc{M}$, governed by a reduced-order hybrid system, the \emph{hybrid zero dynamics} (HZD). To characterize the HZD, we must determine a set of local coordinates for $\mc{M}$, which may be taken as $\bz\in\mc{Z}$ since: 
\begin{equation}\label{eq:iota}
    \bm{\iota}(\bz)\coloneqq(\bpsi(\bz), \bz), 
\end{equation}
where $\bm{\iota}\,:\,\mc{Z}\rightarrow\mc{N}\times\mc{Z}$, is an embedding, provides a parameterization of $\mc{M}$. 
Moreover, $\iota^{-1}(\bm{\eta},\bz)=\bz$ is well-defined on $\mc{M}$, illustrating that $\mc{M}\cong\mc{Z}$.
Hence, the HZD may be expressed as:
\begin{equation} \label{eq:fom_H_zero_dyn}
        \mathcal{H}|_{\mc{M}}
        =
        \left\{
        \begin{array}{ll}
              \dot{\bz} = \bm{\omega}(\bpsi(\bz),\bz) & \text{if} \quad \bz \in \mc{Z} \backslash (\bPhi(\mc{S})\cap\mc{Z})\\
               \bm{\bz}^{+} = \bm{\Delta}_{\bz}(\bpsi(\bz),\bz)   & \text{if} \quad \bz^{-} \in \bPhi(\mc{S})\cap\mc{Z}.
        \end{array} 
        \right.
\end{equation}
Following a similar approach to that in Sec. \ref{sec:fom_dynamics}, \eqref{eq:fom_H_zero_dyn} induces a discrete-time system via the Poincar\'e map:
\begin{equation}\label{eq:fom-discrete-hzd}
    \bz_{k+1} = \bm{\Omega}(\bz_k),\quad \bm{\Omega}(\bz) \coloneqq \bm{\varphi}^{\bz}_{T_{I}^{\bz}(\bz)}\big(\bm{\Delta}_{\bm{\bz}}(\bpsi(\bz), \bz)\big),
\end{equation}
where $\flow^{\bz}_{t}(\bz)$ is the flow of $\dot{\bz} = \bm{\omega}(\bpsi(\bz),\bz)$, $T_{I}^{\bz}(\bz)$ is the time-to-impact function defined similarly to \eqref{eq:fom-time-to-impact}, and $\bm{\Omega}\,:\,\bPhi(\mc{S})\cap\mc{Z}\rightarrow \bPhi(\mc{S})\cap\mc{Z}$ is, again, a partial function. The power of HZD is the ability to infer stability properties of the full-order hybrid system based only on properties of the reduced-order system in \eqref{eq:fom_H_zero_dyn}. In what follows, we demonstrate how such full-order stability properties can be inferred from approximations of \eqref{eq:fom-discrete-hzd}.
% \textcolor{blue}{[Now $\bz \in \mc{Z}$ is messing me up.]}

\subsection{Reduced-Order Models} \label{sec:roms}
\label{sec:ROM}
Rather than work directly with the HZD \eqref{eq:fom_H_zero_dyn}, the approach taken herein is to leverage approximations of the corresponding Poincar\'e map \eqref{eq:fom-discrete-hzd} while still making practical statements regarding stability of the full-order system.  
% Rather than work directly with the HZD \eqref{eq:fom_H_zero_dyn}, the approach taken herein is to leverage approximations of these dynamics while still making practical statements regarding stability.  
% To this end, consider the ROM hybrid control system:
% \begin{equation} \label{eq:rom_HC}
%         \mathcal{HC}_{\rm{R}} \triangleq
%         \left\{
%         \begin{array}{ll}
%               \dot {\br} = \bf(\br) & \text{if} \quad \br \in \mathcal{R} \backslash \mc{S}_{\rm{R}}\\
%               \br^{+} = \bm{\Delta}_{\rm{R}}(\br^{-}, \bm{\ell})    & \text{if} \quad \br^{-} \in \mc{S}_{\rm{R}},
%         \end{array} 
%         \right. 
% \end{equation}
% where $\br\in\mc{R}$ is the state of the ROM, whose continuous evolution is governed by the continuously differentiable 
% vector field $\bf\,:\,\mc{R}\rightarrow\mc{R}$ whereas the reset map 
% $\bm{\Delta}_{\rm{R}}\,:\,\mc{R}\times\mc{L}\rightarrow\mc{R}$ induces discrete jumps based upon the discrete control input $\bm{\ell}\in\mc{L}$. In practice, the states $\bz$ of \eqref{eq:fom_H_zero_dyn} and $\br$ of \eqref{eq:rom_HC} often have the same semantic interpretation (e.g., center of mass position and velocity) whereas the discrete control input $\bm{\ell}$ represents a choice made each stride, such as the step length.
% Once again, we may capture the behavior of this hybrid system with the discrete-time dynamics:
% \begin{equation}\label{eq:rom-discrete-dynamics}
%     \br_{k+1} = \bQ(\br_k,\bm{\ell}_k),\quad \bQ(\br,\bm{\ell}) \coloneqq \flow_{T_I^\br(\br)}^\br\big(\bm{\Delta}_{\rm{R}}(\br,\bm{\ell}) \big),
% \end{equation}
To this end, consider the discrete-time system:
\begin{equation}\label{eq:rom-discrete-dynamics}
    \br_{k+1} = \bQ(\br_k,\bm{\ell}_k),
\end{equation}
% using the forced Poincar\'e map $\bQ\,:\,\mc{S}_{\rm{R}}\times\mc{L}\rightarrow\mc{S}_{\rm{R}}$, where $\flow_{t}^\br(\br)$ is the flow of \eqref{eq:rom_HC}, and $T_I^\br(\br)$ is the time-to-impact function.
where $\br\in\mc{R}$ is the state of the ROM with control input $\bm{\ell}\in\mc{L}$, and
$\bQ\,:\,\mc{R}\times\mc{L}\rightarrow\mc{R}$ is viewed as an approximation of the Poincar\'e map in \eqref{eq:fom-discrete-hzd}.
In practice, the states $\bz$ of \eqref{eq:fom_H_zero_dyn} and $\br$ of \eqref{eq:rom-discrete-dynamics} often have the same (up to a diffeomorphism) semantic interpretation (e.g., center of mass position and velocity), the discrete control input $\bm{\ell}$ represents a choice made each stride, such as the step length, and $\bQ$ typically captures the step-to-step dynamics of, e.g., a LIP.
Given a controller $\bm{\pi}\,:\,\mc{R}\rightarrow\mc{L}$ for  \eqref{eq:rom-discrete-dynamics}, taking $\bm{\ell}=\bm{\pi}(\br)$ produces the closed-loop system:
\begin{equation}\label{eq:rom-cl-discrete-dynamics}
    \br_{k+1} = \bQ_{\rm{cl}}(\br_k) \coloneqq \bQ(\br_k,\bm{\pi}(\br_k)).
\end{equation}
% Suppose now that $\bm{\pi}$ has been designed to induce periodic flows $\flow_{t}^\br(\br)$ of the reduced-order model in the sense that $\flow_{T}^\br(\bm{\Delta}_{\rm{R}}(\br^*))=\br^*$ for some $\br^*\in\mc{S}_{\rm{R}}$ and $T>0$ with a corresponding periodic orbit:
% \begin{equation}\label{eq:OR}
%     \mc{O}_{\rm{R}} = \{\flow_{t}^\br(\bm{\Delta}_{\rm{R}}(\br^*))\,|\,t\in[0,T]\}.
% \end{equation}

\begin{remark}
    As discussed later in Sec. \ref{sec:biped}, the behavior of the above ROM will be encoded via $\bpsi$, and, therefore, in $\mc{M}$. In this regard, $\bpsi$ may be seen as generating desired commands for the actuated components of the full-order hybrid system \eqref{eq:fom_HC_normal_form} (e.g., foot placement commands) based on simplified representations (e.g., inverted pendulum models) of the HZD. 
\end{remark}

Rather than directly transferring stability properties of the discrete-time HZD \eqref{eq:fom-discrete-hzd} back to the full-order hybrid system, we seek to transfer stability properties of the ROM \eqref{eq:rom-cl-discrete-dynamics} back to the full-order hybrid system. To make any meaningful statements in this context, we must first establish a connection between the ROM \eqref{eq:rom-discrete-dynamics} and the HZD \eqref{eq:fom-discrete-hzd}.
% \textcolor{blue}{Don't like the assumption here, but... i) if we don't assume the existence of a periodic orbit then its not clear what we're establishing stability with respect to and ii) if the fixed-point isn't the same (up to diffeomorphism) as the ROM its not clear why there would be any connection between the behavior of the ROM and that of the zero dynamics.}
\begin{lemma}\label{lemma:zero-dyn-orbit}
    Consider the HZD \eqref{eq:fom_H_zero_dyn}, the discrete-time ROM \eqref{eq:rom-cl-discrete-dynamics}, and let $\br^*=\bQ_{\rm{cl}}(\br^*)$ be a fixed point of \eqref{eq:rom-cl-discrete-dynamics}. 
    % corresponding to the periodic orbit $\mc{O}_{\rm{R}}$ in \eqref{eq:OR}. 
    Provided that $\mc{R}\cong\mc{Z}$ for a diffeomorphism $\bm{\Xi}\,:\,\mc{R}\rightarrow\mc{Z}$ and $\bz^*\coloneqq\bXi(\br^*)$ is a fixed point for $\bm{\Omega}(\bz^*)=\bz^*$, then:
    \begin{equation}\label{eq:Oz}
        \mc{O}_{\rm{z}} \coloneqq \{\bm{\varphi}^{\bz}_{t}\big(\bm{\Delta}_{\bm{\bz}}(\bpsi(\bz^*), \bz^*)\big)\,|\,t\in[0,T_{I}^{\bz}(\bz^*)]\},
    \end{equation}
    is a periodic orbit of hybrid zero dynamics \eqref{eq:fom_H_zero_dyn}.
\end{lemma}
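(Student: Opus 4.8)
The plan is to recognize that this statement is, at its core, the hybrid-zero-dynamics analogue of the periodic-orbit characterization \eqref{eq:hybrid-periodic-orbit} established for the full-order system in Sec.~\ref{sec:fom_dynamics}, and to prove it by directly unwinding the definition of the reduced Poincar\'e map $\bm{\Omega}$ in \eqref{eq:fom-discrete-hzd}. First I would observe that, exactly as in Sec.~\ref{sec:fom_dynamics}, the flow of the HZD \eqref{eq:fom_H_zero_dyn} is periodic with period $T$ precisely when there exists a point $\bz^* \in \bPhi(\mc{S})\cap\mc{Z}$ on the reduced switching surface satisfying $\flow^{\bz}_{T}(\bm{\Delta}_{\bz}(\bpsi(\bz^*),\bz^*)) = \bz^*$ with $T = T_{I}^{\bz}(\bz^*)$, and that such a $\bz^*$ induces precisely the candidate orbit $\mc{O}_{\rm{z}}$ in \eqref{eq:Oz}. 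The content of the lemma is then to verify this periodicity condition from the hypotheses.

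Second, I would unpack the hypothesis that $\bz^* \coloneqq \bXi(\br^*)$ is a fixed point of $\bm{\Omega}$. Substituting $\bz^*$ into the definition \eqref{eq:fom-discrete-hzd} gives
\[
\bm{\Omega}(\bz^*) = \flow^{\bz}_{T_{I}^{\bz}(\bz^*)}\big(\bm{\Delta}_{\bz}(\bpsi(\bz^*),\bz^*)\big) = \bz^*,
\]
which is verbatim the periodicity condition from the previous step with $T = T_{I}^{\bz}(\bz^*)$. Consequently, the trajectory that starts at the post-reset state $\bm{\Delta}_{\bz}(\bpsi(\bz^*),\bz^*)$ and flows under $\dot{\bz} = \bm{\omega}(\bpsi(\bz),\bz)$ for time $T_{I}^{\bz}(\bz^*)$ returns exactly to $\bz^* \in \bPhi(\mc{S})\cap\mc{Z}$, at which point the discrete map sends it back to its starting point, so the orbit closes. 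This shows $\mc{O}_{\rm{z}}$ is invariant under the hybrid flow and is retraced every stride, i.e., it is a periodic orbit of \eqref{eq:fom_H_zero_dyn}.

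Since the argument amounts to unwinding definitions, there is no serious analytical obstacle; the only points requiring care are bookkeeping. I would note that $\bz^*$ automatically lies on the reduced switching surface because $\bm{\Omega}$ in \eqref{eq:fom-discrete-hzd} is a (partial) map from $\bPhi(\mc{S})\cap\mc{Z}$ to itself, so any fixed point must belong to its domain; here the diffeomorphism $\bXi$ and the ROM fixed point $\br^*$ serve only to furnish this fixed point $\bz^*$ and play no further role in the present lemma. I would also record that well-posedness of $\mc{O}_{\rm{z}}$ requires $T_{I}^{\bz}(\bz^*)\in(0,\infty)$, the reduced analogue of the restriction to $\tilde{\mc{S}}$ in \eqref{eq:fom-time-to-impact}, which guarantees a genuine, nontrivial orbit rather than a degenerate equilibrium. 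The substantive content beyond this lemma---namely that such a fixed point of $\bm{\Omega}$ exists and is stable---is the object of the surrounding results relating the ROM \eqref{eq:rom-cl-discrete-dynamics} to the HZD, not of this statement.
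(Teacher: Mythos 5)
Your proof is correct and takes essentially the same route as the paper's: both unwind the definition of the reduced Poincar\'e map $\bm{\Omega}$ in \eqref{eq:fom-discrete-hzd}, observe that the fixed-point hypothesis $\bm{\Omega}(\bz^*)=\bz^*$ is verbatim the periodicity condition $\bm{\varphi}^{\bz}_{T_{I}^{\bz}(\bz^*)}(\bm{\Delta}_{\bz}(\bpsi(\bz^*),\bz^*))=\bz^*$, and conclude that $\mc{O}_{\bz}$ is a periodic orbit of \eqref{eq:fom_H_zero_dyn}. Your additional bookkeeping---that $\bz^*$ lies in $\bPhi(\mc{S})\cap\mc{Z}$ since $\bm{\Omega}$ is a partial map on that set, that $T_{I}^{\bz}(\bz^*)\in(0,\infty)$ is needed for a nondegenerate orbit, and that $\bXi$ and $\br^*$ serve only to furnish the fixed point---is consistent with, and slightly more careful than, the paper's two-line argument.
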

\begin{proof}
    If $\bz^*$ is a fixed point of $\bm{\Omega}$, then:
    \begin{equation*}
        \bz^* = \bm{\Omega}(\bz^*) = \bm{\varphi}^{\bz}_{T_{I}^{\bz}(\bz^*)}\big(\bm{\Delta}_{\bm{\bz}}(\bpsi(\bz^*), \bz^*)\big),
    \end{equation*}
    which implies that the flow $\flow_t^\bz(\bz)$ is periodic with period $T_{\bz^*}=T_{I}^{\bz}(\bz^*)$. Since the flow is periodic with respect to $\bz^*$, $\mc{O}_{\bz}$ in \eqref{eq:Oz} is a periodic orbit of \eqref{eq:fom_H_zero_dyn}, as desired.
\end{proof}
While the above result establishes the connection between a ROM and a periodic orbit for the HZD, nothing, as of yet, allows for making statements regarding the stability of such an orbit. 
Before proceeding, we define:
\begin{equation}\label{eq:d}
    \bd_{k} \coloneqq \bm{\Omega}(\bz_k) - \bm{\Xi}\circ\bQ_{\rm{cl}}\circ\bm{\Xi}^{-1}(\bz_k),
\end{equation}
as the discrepancy between the Poincar\'e map of the HZD \eqref{eq:fom-discrete-hzd} and the ROM \eqref{eq:rom-cl-discrete-dynamics} represented in $\bz$ coordinates.
The following theorem captures the main result of this paper and establishes when stability properties of the ROM may be transferred back to the original \emph{full-order} hybrid system.

\begin{theorem}\label{theorem:main}
    Consider the hybrid system \eqref{eq:fom_HC_normal_form}, the manifold $\mc{M}$ from \eqref{eq:zero-dyn-manifold}, and suppose there exists an RES-CLF $V_{\varepsilon}\,:\,\mc{N}\times\mc{Z}\rightarrow\R_{\geq0}$ for \eqref{eq:fom_HC_normal_form} with respect to $\mc{M}$. Let $\hat{\bk}_{\varepsilon}\,:\,\mc{N}\times\mc{Z}\rightarrow\mc{U}$ be any locally Lipschitz feedback controller satisfying the RES-CLF condition \eqref{eq:RES-CLF} and suppose that $\mc{M}$ is hybrid invariant under this controller with corresponding HZD in \eqref{eq:fom_H_zero_dyn}. Let the conditions of Lemma \ref{lemma:zero-dyn-orbit} hold and suppose that  $\br^*=\bQ_{\rm{cl}}(\br^*)$ is an exponentially stable fixed point of the corresponding ROM \eqref{eq:rom-cl-discrete-dynamics}. Then, there exists $\overline{\varepsilon}>0$ 
    and $\rho>0$
    such that for all $\varepsilon\in(0,\overline{\varepsilon})$
    and all $\bd_k$ satisfying $\|\bd\|_{\infty}\leq\rho$,
    $\mc{O}\coloneqq \bm{\iota}(\mc{O}_{\bz})$, with $\iota$ as in \eqref{eq:iota}, is a locally E-ISS periodic orbit of the closed-loop hybrid system. 
\end{theorem}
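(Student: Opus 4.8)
The plan is to prove the statement in two stages that mirror Sec.~\ref{sec:HZD} and Sec.~\ref{sec:ROM}: first transfer exponential stability of the ROM to a local E-ISS estimate for the hybrid zero dynamics Poincar\'e map $\bm{\Omega}$ in \eqref{eq:fom-discrete-hzd}, and then lift this estimate to the full-order Poincar\'e map $\bP$ in \eqref{eq:fom_discrete_dyn} using the RES-CLF to dominate the dynamics transversal to $\mc{M}$. Throughout I would work in the coordinates $(\bm{\eta},\bz)=\bPhi(\bx)$, using the transversal coordinate $\bh(\bm{\eta},\bz)=\bm{\eta}-\bpsi(\bz)$, which vanishes exactly on $\mc{M}$, together with the tangential coordinate $\bz$ parameterizing $\mc{M}$ through the embedding $\iota$ of \eqref{eq:iota}.

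\emph{Stage 1 (ROM $\to$ HZD).} From the definition of the discrepancy in \eqref{eq:d}, the HZD Poincar\'e dynamics may be rewritten as $\bz_{k+1}=\bm{\Omega}(\bz_k)=\bXi\circ\bQ_{\rm cl}\circ\bXi^{-1}(\bz_k)+\bd_k$, i.e. as the conjugated ROM map $g\coloneqq\bXi\circ\bQ_{\rm cl}\circ\bXi^{-1}$ driven by the additive input $\bd_k$. Since $\bXi$ is a diffeomorphism, the linearization of $g$ at $\bz^*=\bXi(\br^*)$ is similar to $D\bQ_{\rm cl}(\br^*)$, so $g$ inherits the exponential stability of $\br^*$; by the converse Lyapunov theorem for discrete exponentially stable systems \cite{JiangAutomatica01} there is a local quadratic E-ISS Lyapunov function $V_{\bz}$ for $g$. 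Treating $\bd_k$ as the disturbance then yields, for $\|\bd\|_{\infty}$ small, the local E-ISS estimate of $\bm{\Omega}$ about $\bz^*$, and hence, via Lemma~\ref{lemma:zero-dyn-orbit}, E-ISS of the HZD orbit $\mc{O}_{\bz}$ in \eqref{eq:Oz}.

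\emph{Stage 2 (HZD $\to$ FOM).} Near $\mc{O}$ the full Poincar\'e map decomposes, in $(\bh,\bz)$ coordinates, as $\bh_{k+1}=\bP_{\bh}(\bh_k,\bz_k)$ and $\bz_{k+1}=\bP_{\bz}(\bh_k,\bz_k)$, where hybrid invariance of $\mc{M}$ forces $\bP_{\bh}(\bzero,\bz)=\bzero$ and $\bP_{\bz}(\bzero,\bz)=\bm{\Omega}(\bz)$. For the transversal subsystem I would integrate the RES-CLF inequality \eqref{eq:RES-CLF} along the closed-loop flow to obtain contraction of $V_{\varepsilon}$ by a factor $e^{-(c_3/\varepsilon)T_I}$ over one step; combined with the quadratic sandwich bounds on $V_{\varepsilon}$ and the bounded amplification of $\bh$ across the locally Lipschitz reset $\bm{\Delta}$, and using that $T_I$ is bounded below near the orbit, this gives $\|\bh_{k+1}\|\le\kappa(\varepsilon)\|\bh_k\|$ with $\kappa(\varepsilon)\to 0$ as $\varepsilon\to 0$, hence $\kappa(\varepsilon)<1$ for all $\varepsilon$ below a threshold $\overline{\varepsilon}$. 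Smoothness of the flow, reset, and time-to-impact near the orbit then yields $\bP_{\bz}(\bh_k,\bz_k)=\bm{\Omega}(\bz_k)+O(\|\bh_k\|)$, so the tangential update equals $g(\bz_k)+\bd_k$ perturbed by a term vanishing with the internal transversal state $\bh_k$.

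\emph{Stage 3 (composition).} The previous stages exhibit a cascade: an exponentially contracting transversal subsystem $\bh$ feeding an E-ISS tangential subsystem $\bz$ with external input $\bd$. I would form the composite candidate $W=V_{\bz}+\lambda\,V_{\varepsilon}$ for the full Poincar\'e map and, choosing $\lambda>0$ and shrinking $\overline{\varepsilon}$ to satisfy the attendant small-gain condition, establish the E-ISS decrease of the fixed point $\bx^*$ of $\bP$ with gain in $\|\bd\|_{\infty}$, valid for $\|\bd\|_{\infty}\le\rho$. Finally, the standard equivalence between stability of a hybrid orbit and stability of its Poincar\'e fixed point \cite{westervelt2018feedback} promotes this to local E-ISS of $\mc{O}=\iota(\mc{O}_{\bz})$, as claimed. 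The main obstacle is Stage~3 together with the $O(\|\bh_k\|)$ bound of Stage~2: one must control how the transversal error perturbs both the time-to-impact and the post-reset tangential state, and verify that the resulting interconnection gain is dominated by the rapid transversal contraction uniformly for small $\varepsilon$ -- which is precisely where the ``rapid'' in RES-CLF is essential.
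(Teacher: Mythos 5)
Your overall architecture has the same skeleton as the paper's, but the execution differs in where the heavy lifting happens. Your Stage~1 is exactly the paper's Lemma~\ref{lemma:discrete-time-iss}, with one deviation: you pass through the linearization of $\bQ_{\rm cl}$ at $\br^*$, which requires differentiability that the theorem never assumes; the paper instead sandwiches norms using the local Lipschitz constants of $\bXi$ and $\bXi^{-1}$ on compact neighborhoods, which needs only the exponential-stability estimate itself, and then invokes the converse Lyapunov theorems of \cite{JiangSCL02} to get the quadratic $V_{\bz}$. Your Stages~2--3, by contrast, attempt a self-contained re-derivation of what the paper obtains by citation: the paper passes from E-ISS of the Poincar\'e map $\bm{\Omega}$ to E-ISS of the orbit $\mc{O}_{\bz}$ via \cite[Thm.~1]{PoulakakisTAC19} (note that the equivalence in \cite{westervelt2018feedback} you cite covers plain stability, not the ISS version), and then lifts E-ISS of $\mc{O}_{\bz}$ to E-ISS of $\mc{O}=\bm{\iota}(\mc{O}_{\bz})$ for the full-order system via \cite[Thm.~1]{ShishirACC18}, which is precisely the RES-CLF cascade argument you sketch. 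Reproving that machinery is a legitimate and more transparent route; it makes explicit where the ``rapid'' tunability in $\varepsilon$ is consumed, which the paper's proof leaves inside the citation.

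However, your Stage~3 has a genuine gap as written: the composite $W = V_{\bz} + \lambda V_{\varepsilon}$ does not close, due to a homogeneity mismatch. Hybrid invariance gives the tangential perturbation $\bP_{\bz}(\bh,\bz) - \bm{\Omega}(\bz) = O(\|\bh\|)$, so via Lipschitz continuity of $V_{\bz}$ the tangential decrease is degraded by a term \emph{linear} in $\|\bh_k\|$, while $\lambda V_{\varepsilon}$ supplies only a decrease \emph{quadratic} in $\|\bh_k\|$; a quadratic margin cannot dominate a linear perturbation near the origin (Young's inequality leaves a constant offset, yielding only practical stability, not the E-ISS estimate \eqref{eq:E-ISS} with gain solely in $\|\bd\|_{\infty}$). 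Two standard repairs: (i) use $W = V_{\bz} + \lambda\sqrt{V_{\varepsilon}}$, since $\sqrt{V_{\varepsilon}}$ is norm-equivalent to $\|\bh\|$ and still contracts geometrically by your Stage~2 bound, so a large enough $\lambda$ closes the small-gain condition --- at the cost that $W$ is no longer quadratic and the sandwich bounds required by the E-ISS Lyapunov definition must be re-established; or (ii) compose trajectory estimates directly: $\|\bh_k\| \le \kappa(\varepsilon)^k\|\bh_0\|$ feeds the E-ISS $\bz$-subsystem as a geometrically decaying input, and summing the resulting convolution gives the exponential bound with gain in $\|\bd\|_{\infty}$ --- this is essentially how \cite{ShishirACC18}, and \cite{ames2014rapidly} in the undisturbed case, proceed. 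Finally, your Stage~2 tacitly assumes the full-order Poincar\'e map is well defined in a neighborhood of $\mc{O}$ off of $\mc{M}$, which requires transversality of the orbit to $\mc{S}$ and continuity of the time-to-impact function; these hypotheses are implicit in the cited theorems and should be stated if you re-derive them.
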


Before proving Theorem \ref{theorem:main}, we present a key technical lemma, 
illustrating that if the discrete-time ROM \eqref{eq:rom-cl-discrete-dynamics} is exponentially stable and its state space is diffeomorphic to that in \eqref{eq:fom-discrete-hzd}, then the discrete-time HZD \eqref{eq:fom-discrete-hzd} are E-ISS.
\begin{lemma}\label{lemma:discrete-time-iss}
    Let the conditions of Lemma \ref{lemma:zero-dyn-orbit} hold and suppose that $\br^*=\bQ_{\rm{cl}}(\br^*)$ is an exponentially stable fixed point of 
    % the corresponding discrete-time ROM 
    \eqref{eq:rom-cl-discrete-dynamics}. 
    % Define:
    % \begin{equation}
    %     \bQ_{\bz}(\bz) \coloneqq \bm{\Xi}\circ\bQ_{\rm{cl}}\circ\bm{\Xi}^{-1}(\bz)
    % \end{equation}
    % \begin{equation}
    %     \bd \coloneqq \bm{\Omega}(\bz) - \bXi(\bQ_{\rm{cl}}(\bm{\Xi}^{-1}(\bz)))
    % \end{equation}
    % Consider the hybrid zero dynamics \eqref{eq:fom_H_zero_dyn}, the ROM hybrid system \eqref{eq:rom_HC}, and let $\br^*=\bQ_{\rm{cl}}(\br^*)$ be an exponentially stable fixed point of the corresponding discrete-time ROM \eqref{eq:rom-cl-discrete-dynamics}. Provided that $\mc{R}\cong\mc{Z}$ for a diffeomorphism $\bm{\Xi}\,:\,\mc{R}\rightarrow\mc{Z}$, 
    Then, there exists a $\rho>0$ such that for all $\bd_{k}$ satisfying $\|\bd\|_{\infty}\leq\rho$, the discrete-time HZD \eqref{eq:fom-discrete-hzd} are locally E-ISS. 
    % in the sense that for all $\bz_0\in\mc{B}_{\rho}(\bz^*)$:
    % \begin{equation}\label{eq:discrete-zero-dyn-eiss}
    %     \|\bz_{k} - \bz^*\| \leq M\alpha^k\|\bz_0 - \bz^*\| + \gamma(\|\bd\|_{\infty}),
    % \end{equation}
    % for $\rho>0$, $M>0$, $\alpha\in(0,1)$, and $\gamma\in\mc{K}$, where $\bz^*\coloneqq \bm{\Xi}(\br^*)$, and:
    % \begin{equation}
    %     \bd_{k} \coloneqq \bm{\Omega}(\bz_k) - \bXi(\bQ_{\rm{cl}}(\bm{\Xi}^{-1}(\bz_k))).
    % \end{equation}
\end{lemma}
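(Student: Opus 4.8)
The plan is to rewrite the HZD Poincar\'e map as a perturbation of the ROM dynamics expressed in $\bz$-coordinates, and then certify E-ISS via a Lyapunov characterization. Introduce the closed-loop ROM in $\bz$-coordinates, $\tilde{\bQ}_{\rm{cl}} \coloneqq \bm{\Xi}\circ\bQ_{\rm{cl}}\circ\bm{\Xi}^{-1}$, so that the defining relation \eqref{eq:d} for $\bd_k$ lets us express the HZD Poincar\'e map \eqref{eq:fom-discrete-hzd} as the perturbed system $\bz_{k+1} = \tilde{\bQ}_{\rm{cl}}(\bz_k) + \bd_k$, with $\bd_k$ viewed as a disturbance input. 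The objective thus reduces to showing that this perturbed map is locally E-ISS about $\bz^* = \bm{\Xi}(\br^*)$, which by the necessity-and-sufficiency of E-ISS Lyapunov functions \cite{JiangAutomatica01} follows once an E-ISS Lyapunov function is exhibited.

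First I would transfer exponential stability of $\br^*$ through the diffeomorphism. Since $\bm{\Xi}$ is smooth with smooth inverse, it is bi-Lipschitz on a compact neighborhood of $\br^*$; hence the exponential bound $\|\br_k - \br^*\| \le M\alpha^k\|\br_0 - \br^*\|$ for $\bQ_{\rm{cl}}$ yields $\|\bz_k - \bz^*\| \le M'\alpha^k\|\bz_0 - \bz^*\|$ for the nominal map $\tilde{\bQ}_{\rm{cl}}$, with the same rate $\alpha$ and a constant $M'$ absorbing the Lipschitz constants of $\bm{\Xi}$ and $\bm{\Xi}^{-1}$. Thus $\bz^*$ is a locally exponentially stable fixed point of $\tilde{\bQ}_{\rm{cl}}$.

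Next, by a converse Lyapunov theorem for discrete-time exponential stability \cite{JiangAutomatica01}, there exists a locally Lipschitz $V\,:\,\mc{Z}\rightarrow\R_{\geq0}$ and constants $c_1,c_2>0$, $c_3\in(0,1)$ satisfying, on a neighborhood of $\bz^*$, the quadratic bounds $c_1\|\bz-\bz^*\|^2 \le V(\bz)\le c_2\|\bz-\bz^*\|^2$ and the nominal decrease $V(\tilde{\bQ}_{\rm{cl}}(\bz)) - V(\bz) \le -c_3 V(\bz)$; a concrete choice is $V(\bz) = \sum_{k\ge0}\|\tilde{\bQ}_{\rm{cl}}^{k}(\bz) - \bz^*\|^2$, whose convergence and local Lipschitz continuity follow from the exponential bound and local Lipschitzness of $\tilde{\bQ}_{\rm{cl}}$. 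I would then show this $V$ is an E-ISS Lyapunov function for the perturbed map by decomposing $V(\bm{\Omega}(\bz)) - V(\bz) = [V(\tilde{\bQ}_{\rm{cl}}(\bz)+\bd) - V(\tilde{\bQ}_{\rm{cl}}(\bz))] + [V(\tilde{\bQ}_{\rm{cl}}(\bz)) - V(\bz)]$ and bounding the first bracket by $L_V\|\bd\|$ using local Lipschitzness of $V$, which gives $V(\bm{\Omega}(\bz)) - V(\bz) \le -c_3 V(\bz) + \sigma(\|\bd\|)$ with $\sigma(s) = L_V s \in\mc{K}$. I favor this Lipschitz estimate over a purely quadratic one precisely because the latter reintroduces the ratio $c_2/c_1$ into the decrease term and can destroy the contraction margin.

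Finally, choosing $\rho>0$ small enough that an appropriate sublevel set of $V$ is forward invariant under any disturbance with $\|\bd\|_{\infty}\le\rho$ keeps trajectories inside the region where these estimates hold, so that $V$ is genuinely a \emph{local} E-ISS Lyapunov function for $\bz_{k+1} = \tilde{\bQ}_{\rm{cl}}(\bz_k) + \bd_k$; invoking \cite{JiangAutomatica01} then yields local E-ISS of the HZD \eqref{eq:fom-discrete-hzd}, as claimed. I expect the main obstacle to be the third step: securing that the converse Lyapunov function is genuinely locally Lipschitz so the perturbation term is controlled linearly in $\|\bd\|$, while simultaneously constraining $\rho$ and the initial neighborhood so the local estimates remain valid along the entire perturbed trajectory, i.e.\ establishing forward invariance of a sublevel set of $V$ under the bounded disturbance.
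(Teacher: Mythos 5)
Your proposal is correct and follows essentially the same route as the paper's own proof: both transfer exponential stability of $\br^*$ to $\bz^*=\bXi(\br^*)$ via the bi-Lipschitz property of the diffeomorphism on a compact neighborhood, invoke a discrete-time converse Lyapunov theorem to obtain a quadratically bounded, (locally Lipschitz) Lyapunov function for the nominal map $\bQ_{\bz}=\bXi\circ\bQ_{\rm{cl}}\circ\bXi^{-1}$, rewrite the HZD Poincar\'e map as $\bz_{k+1}=\bQ_{\bz}(\bz_k)+\bd_k$, and bound the perturbation term by $L\|\bd\|$ to certify a local E-ISS Lyapunov function. Your extra attention to forward invariance of a sublevel set under $\|\bd\|_{\infty}\leq\rho$ (and your explicit candidate $V$) is slightly more careful than the paper, which leaves that localization implicit, but it does not constitute a different approach.
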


\begin{proof}
    Since $\br^*$ is an exponentially stable fixed point of \eqref{eq:rom-cl-discrete-dynamics}, there exist $M>0$, $\alpha\in(0,1)$, and $\delta>0$ such that for all 
    % $\br_0\in\mc{B}_{\delta}(\br^*)\cap\mc{S}_{\rm{R}}$ 
    $\br_0\in\mc{B}_{\delta}(\br^*)$ 
    we have:
    \begin{equation*}
        \|\br_k - \br^*\| \leq M\alpha^k\|\br_0 - \br^*\| \leq M \|\br_0 - \br^*\|  \eqqcolon\epsilon, \quad \forall k\in\mathbb{N},
    \end{equation*}
    which implies that $\br_k\in\overline{\mc{B}_{\epsilon}(\br^*)}$
    for all $k\in\mathbb{N}$. Let $\bz=\bm{\Xi}(\br)$ and define:
    \begin{equation}\label{eq:rom-cl-z-dynamics}
        \bz_{k+1} = \bQ_{\bz}(\bz_k),\quad \bQ_{\bz}(\bz) \coloneqq \bm{\Xi}\circ\bQ_{\rm{cl}}\circ\bm{\Xi}^{-1}(\bz),
    \end{equation}
    as the closed-loop ROM dynamics \eqref{eq:rom-cl-discrete-dynamics} in the $\bz$ coordinates with fixed point $\bz^*=\bm{\Xi}(\br^*)$. Since $\br_k\in\overline{\mc{B}_{\epsilon}(\br^*)}$, we have $\bz_{k}\in \bXi(\overline{\mc{B}_{\epsilon}(\bXi^{-1}(\bz^*))})$ for all $k\in\mathbb{N}$.  As $\bm{\Xi}$ is a diffeomorphism, it is locally Lipschitz continuous and hence Lipschitz continuous on any compact set. Thus, there exists an $\overline{L}>0$ such that for all $\br\in\overline{\mc{B}_{\epsilon}(\br^*)}$ and all $\bz\in \bXi(\overline{\mc{B}_{\epsilon}(\bXi^{-1}(\bz^*))})$:
    \begin{equation*}
        \|\bz - \bz^*\| = \|\bXi(\br) - \bXi(\br^*)\|  \leq \overline{L}\|\br - \br^*\|.
    \end{equation*}
    Similarly, $\bXi^{-1}$ is also locally Lipschitz, implying the existence of an $\underline{L}>0$ such that for all $\br\in\overline{\mc{B}_{\epsilon}(\br^*)}$ and all $\bz\in\bXi(\overline{\mc{B}_{\epsilon}(\bXi^{-1}(\bz^*))})$:
    \begin{equation*}
        \|\br - \br^*\| = \|\bXi^{-1}(\bz) - \bXi^{-1}(\bz^*)\|  \leq \underline{L}\|\bz - \bz^*\|.
    \end{equation*}
    Using the preceding bounds by noting that $\br_k\in\overline{\mc{B}_{\epsilon}(\br^*)}$ and $\bz_k\in\bXi(\overline{\mc{B}_{\epsilon}(\bXi^{-1}(\bz^*))})$ for all $k\in\mathbb{N}$, we have:
    \begin{equation*}
        \begin{aligned}
            \|\bz_k - \bz^*\| \leq  \overline{L}\|\br_k - \br^*\| \leq & \overline{L}M\alpha^k\|\br_0 - \br^*\| \\ 
            \leq & \underline{L}\overline{L}M\alpha^k\|\bz_0 - \bz^*\|,
        \end{aligned}
    \end{equation*}
    which demonstrates local exponential stability of $\bz^*$ for \eqref{eq:rom-cl-z-dynamics}. As $\bz^*$ is locally exponentially stable for \eqref{eq:rom-cl-z-dynamics}, the converse Lyapunov theorems in \cite[Thm. 1]{JiangSCL02} and \cite[Thm. 2]{JiangSCL02} guarantee the existence of a continuously differentiable function $V_{\bz}\,:\,\bPhi(\mc{S})\cap\mc{Z}\rightarrow\R_{\geq0}$ satisfying:
    \begin{equation}
        c_1\|\bz - \bz^*\|^2 \leq V_{\bz}(\bz) \leq c_2\|\bz - \bz^*\|^2,
    \end{equation}
    \begin{equation}
        V_{\bz}(\bQ_{\bz}(\bz)) - V_{\bz}(\bz) \leq -c_3V_{\bz}(\bz),
    \end{equation}
    for all $\bz\in\mc{B}_{\rho}(\bz^*)\cap\bPhi(\mc{S})\cap\mc{Z}$ with $c_1,c_2,\rho>0$ and $c_3\in(0,1)$.
    % for all $\br$ in (\textcolor{blue}{probably some small ball about the fixed-point}), where $c_1,c_2,c_3>0$ (\textcolor{blue}{Does $c_3$ need to be between 0 and 1?)}. Moreover, since $\mc{R}\cong\mc{Z}$ with diffeomorphism $\bm{\Xi}\,:\,\mc{R}\rightarrow\mc{Z}$, we may represent 
    To show that this implies E-ISS of the discrete-time hybrid zero dynamics, we represent 
    \eqref{eq:fom-discrete-hzd} as:
    \begin{equation*}
        \bz_{k+1} = \bm{\Omega}(\bz_{k}) \pm \bQ_{\bz}(\bz_{k}) = \bQ_{\bz}(\bz_{k}) + \bd_{k},
    \end{equation*}
    where $\bd_{k}= \bm{\Omega}(\bz_{k}) - \bQ_{\bz}(\bz_{k})$.
    Next, we note that:
    \begin{equation*}
        \begin{aligned}
            V_{\bz}(\bOmega(\bz)) = & V_{\bz}(\bQ_{\bz}(\bz) + \bd) \\ 
            = & V_{\bz}(\bQ_{\bz}(\bz) + \bd) \pm V_{\bz}(\bQ_{\bz}(\bz)) \\ 
            \leq & V_{\bz}(\bQ_{\bz}(\bz))  + |V_{\bz}(\bQ_{\bz}(\bz) + \bd) - V_{\bz}(\bQ_{\bz}(\bz))| \\
            \leq & V_{\bz}(\bQ_{\bz}(\bz)) + L\|\bd\|,
        \end{aligned}
    \end{equation*}
    where the first inequality follows from the triangle inequality and the second from the Lipschitz continuity of $V_{\bz}$ with Lipschitz constant $L>0$. The above implies that:
    \begin{equation}
        \begin{aligned}
            V_{\bz}(\bm{\Omega}(\bz)) - V_{\bz}(\bz) \leq & V_{\bz}(\bQ_{\bz}(\bz)) - V_{\bz}(\bz) + L\|\bd\| \\
            \leq & -c_3V_{\bz}(\bz) + L\|\bd\|, \\
        \end{aligned}
    \end{equation}
    for all $\bz\in\mc{B}_{\rho}(\bz^*)\cap\bPhi(\mc{S})\cap\mc{Z}$ and any $\bd$ 
    such that
    % and, hence, for all 
    $\|\bd\|_{\infty}\leq \rho$. Hence, $V_{\bz}$ is a local E-ISS Lyapunov function for \eqref{eq:fom-discrete-hzd}, which implies that \eqref{eq:fom-discrete-hzd} is E-ISS.
\end{proof}

With the preceding lemma, we now have the tools in place to establish the main result of this paper.

\begin{proof}[Proof (of Theorem \ref{theorem:main})]
    Lemma \ref{lemma:discrete-time-iss} ensures that, under the conditions of the theorem statement, the Poincar\'e map $\bm{\Omega}\,:\,\bPhi(\mc{S})\cap\mc{Z}\rightarrow\bPhi(\mc{S})\cap\mc{Z}$ associated with the periodic orbit $\mc{O}_{\bz}$ of the HZD \eqref{eq:fom_H_zero_dyn} is locally E-ISS. Given that the Poincar\'e map of the HZD is locally E-ISS by Lemma \ref{lemma:discrete-time-iss}, it follows directly from \cite[Thm. 1]{PoulakakisTAC19} that the corresponding periodic orbit $\mc{O}_{\bz}$ is also locally E-ISS. Next, note that since $\mc{M}$ is hybrid invariant under the RES-CLF controller $\bu=\hat{\bk}_{\varepsilon}(\bm{\eta},\bz)$, the hybrid periodic orbit $\mc{O}_{\bz}$ of the HZD induces a hybrid periodic orbit $\mc{O}=\bm{\iota}(\mc{O}_{\bz})$ for the full-order system via the embedding $\bm{\iota}(\bz)=(\bpsi(\bz),\bz)$. Since $\mc{O}_{\bz}$ is E-ISS and $V_{\varepsilon}$ is an RES-CLF with respect to $\mc{M}$, \cite[Thm. 1]{ShishirACC18} guarantees the existence of an $\overline{\varepsilon}>0$ such that for all $\varepsilon\in(0,\overline{\varepsilon})$, the RES-CLF controller $\bu=\hat{\bk}_{\varepsilon}(\bm{\eta},\bz)$ enforces local E-ISS of $\mc{O}$ for the closed-loop full-order hybrid system \eqref{eq:fom_HC_normal_form}, as desired. 
\end{proof}

\begin{remark}
    Theorem \ref{theorem:main} establishes E-ISS of the periodic orbit $\mc{O}= \bm{\iota}(\mc{O}_{\bz})$ for the full-order hybrid system \eqref{eq:fom_HC_normal_form}, where 
    % \begin{equation*}
    %     \bd \coloneqq \bm{\Omega}(\bz) - \bm{\Xi}\circ\bQ_{\rm{cl}}\circ\bm{\Xi}^{-1}(\bz),
    % \end{equation*}
    % the discrepancy between the Poincar\'e map of the HZD \eqref{eq:fom-discrete-hzd} and that of ROM \eqref{eq:rom-cl-discrete-dynamics} represented in $\bz$ coordinates,
    \eqref{eq:d}
    is viewed as a disturbance input. If a ROM provides a poor approximation of the HZD, then 
    the conditions of Theorem \ref{theorem:main} may not hold.
    % the resulting E-ISS bounds in \eqref{eq:E-ISS} on the full-order Poincar\'e map associated with $\mc{O}$, while still valid, may be too large to be useful in practice. 
    In general, completely characterizing the discrepancy between the HZD and a ROM is challenging; however, we demonstrate empirically in Sec. \ref{sec:biped} that, in the context of bipedal locomotion, ROMs commonly used in practice (e.g., variants of the LIP) lead to a small discrepancy, resulting in stability despite using dramatically simplified representations of the full-order hybrid system. 
\end{remark}

%%%%%%%%%%%%%%%%%%%%%%%%%%%%%%%%%%%%%%%%%%%%%%%%%%%%%%%%%%%%%%%%%%%%%%%%%%%%%%%%
% \input{Sections/Stability}

%%%%%%%%%%%%%%%%%%%%%%%%%%%%%%%%%%%%%%%%%%%%%%%%%%%%%%%%%%%%%%%%%%%%%%%%%%%%%%%%
\section{Application to Bipedal Walking}\label{sec:biped}
% The previous sections outline our proposed theoretical framework for a general class of full-order and reduced-order models. 
In this section, we elucidate the developed theory by demonstrating its application to bipedal locomotion, where
% the theory for general cases. Here, we elucidate the developed theory by considering a concrete example. 
%we use a LIP variant to realize walking on a five-link planar biped (Fig. \ref{fig:fom_and_rom}).
we use a hybrid linear inverted pendulum (HLIP) controller to realize walking on a five-link planar biped (Fig. \ref{fig:fom_and_rom}).
%
%%%%%%%%%%%%%%%%%%%%%%%%%%%%%%%%%%%%%%%%%%%%%%%%%%%%%%%%%%%%%%%%%%%%%%%%%%%%%%%
%
\subsection{Five-Link Biped}
\begin{figure}[t]
    \centering
    \subfloat[\centering]{{\includegraphics[width=0.32\linewidth]{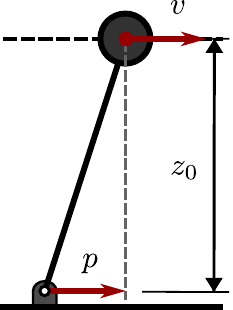} }}%
    \hspace{0.05\linewidth}
    \subfloat[\centering]{{\includegraphics[width=0.32\linewidth]{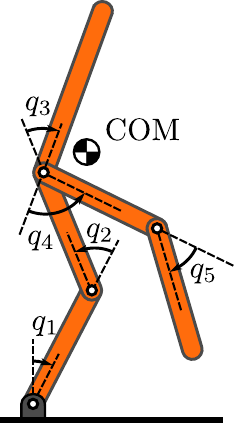} }}%
    \vspace{-0.25cm}
    \caption{(a) A linear inverted pendulum model is used as the ROM for (b) a 5-link planar walker as the FOM. In the FOM, counterclockwise rotation corresponds to positive angles.}%
    \label{fig:fom_and_rom}
    \vspace{-6.0mm}
\end{figure}
Using a \textit{pinned model}, the configuration of the five-link biped is given by $\bq = [q_1 \ q_2 \ q_3 \ q_4 \ q_5]^\top \in \R^5$.
%
% \begin{equation}
%     \bq = 
%     \begin{bmatrix}
%         q_1 & q_2 & q_3 & q_4 & q_5
%     \end{bmatrix}
%     \in \R^5.
% \end{equation}
%
The dynamics can be written in the form shown in \eqref{eq:fom_continuous_dyn} %where the actuation matrix in this case is,
with actuation matrix:
\begin{equation}
    \bB = 
    \begin{bmatrix}
        \bm{0}_{1\times4} \\ \bI_{4\times4}
    \end{bmatrix}
    \in \R^{5 \times 4}.
\end{equation}
In this model, the hips and knees are actuated, $\bq_\mathfrak{a} = \left[ q_2 \;\; q_3 \;\; q_4 \;\; q_5\right]^\top$, and the angle at the pivot point, $q_\mathfrak{u} = q_1$, is unactuated. The reset map for this system is characterized by the instantaneous momentum transfer equations. For a complete discussion on the reset map, see \cite{westervelt2018feedback}.

As discussed in Section \ref{sec:fom_diffeomorphism} we choose a diffeomorphism to transform the state into normal form. We select the actuated and unactuated coordinates as follows:
\begin{align*}
    \bm{\eta} = \mathbf{\Phi}_{\bm{\eta}} (\bx) =&
    \begin{bmatrix}
        \bB^{\top} \bq \\
        \bB^{\top} \dot{\bq}
    \end{bmatrix}
    =
    \begin{bmatrix}
        \bq_\mathfrak{a} \\
        \dot{\bq}_\mathfrak{a}
    \end{bmatrix}
    =
    \begin{bmatrix}
        \bm{\eta}_1 \\ \bm{\eta}_2
    \end{bmatrix}
    \in \R^8
    ,
    \\
    \bz = \mathbf{\Phi}_{\bz} (\bx) = &
    \begin{bmatrix}
        \bN \bq \\
        \bN \bD(\bq) \dot{\bq}
    \end{bmatrix}
    =
    \begin{bmatrix}
        q_\mathfrak{u} \\
        \bD_\mathfrak{u}(\bq) \dot{\bq}
    \end{bmatrix}
    =
    \begin{bmatrix}
        z_1 \\ z_2
    \end{bmatrix}
    \in \R^2.
\end{align*}
%
% We may choose $\bN \in \R^{1\times5}$ as any left annihilator of the constant actuation  matrix $\bB$ in the sense that $\bN\bB=\bzero$, thus satisfying $\pdv{\bPhi_{\bz}}{\bx}\bG(\bx)\equiv \bzero$. 
Here, we select $\bN = [1 \;\;\bzero_{1\times4}]$, thus satisfying $\pdv{\bPhi_{\bz}}{\bx}\bG(\bx)\equiv \bzero$. The unactuated coordinates are then the first configuration state of the position, $q_1$, and the unactuated momentum state. 

%%%%%%%%%%%%%%%%%%%%%%%%%%%%%%%%%%%%%%%%%%%%%%%%%%%%%%%%%%%%%%%%%%%%%%%%%%%%%%%
\subsection{Hybrid Linear Inverted Pendulum}
The HLIP model is a point mass model with mass $m$ at a fixed height, $z_0 \in \mathbb{R}_{>0}$, and where the state is composed of the mass horizontal position, $p \in \mathbb{R}$, and velocity, $v \in \mathbb{R}$, both relative to the current stance foot \cite{xiong20223}. By design, this reduced-order model captures the ``falling over" nature of the robot. The continuous dynamics for this ROM are:
\begin{equation} \label{eq:hlip_continuous_dyn}
    \dot{\br} = 
    \bA_{\rm{ssp}}\br = 
    \begin{bmatrix}
        0 & 1 \\ \frac{g}{z_0} & 0
    \end{bmatrix}
    \begin{bmatrix}
        p \\ v
    \end{bmatrix}
\end{equation}
where $g$ is the acceleration due to gravity. The reset map for this model occurs at a fixed continuous phase intervals, $T_{\rm{ssp}}$ (single support phase). The discrete dynamics are referred to as \textit{step-to-step (S2S) dynamics}, and are captured as in \eqref{eq:rom-discrete-dynamics}:
%
% \begin{align}
%     \br_{k+1} =& 
%     \underbrace{
%     e^{\bA_{\rm{ssp}} T_{\rm{ssp}}}
%     \begin{bmatrix}
%         1 & 0 \\
%         0 & 1
%     \end{bmatrix}
%     }_{\bA_{\rm{R}}}
%     \br_k
%     +
%     \underbrace{
%     e^{\bA_{\rm{ssp}} T_{\rm{ssp}}}
%     \begin{bmatrix}
%         -1 \\
%         0 
%     \end{bmatrix}
%     }_{\bB_{\rm{R}}}
%     {\ell}_k,
%     \nonumber
%     \\
%     =& \bA_{\rm{R}} \br_k + \bB_{\rm{R}} {\ell}_k.
% \end{align}
\begin{equation*}
    \br_{k+1} = 
    e^{\bA_{\rm{ssp}} T_{\rm{ssp}}}
    \br_k
    +
    e^{\bA_{\rm{ssp}} T_{\rm{ssp}}}
    \begin{bmatrix}
        -1 \\
        0 
    \end{bmatrix}
    {\ell}_k
    = \bA_{\rm{R}} \br_k + \bB_{\rm{R}} {\ell}_k.
\end{equation*}
%
%Here, stabilization of these ROM dynamics are through choice of discrete foot placement, $\bm{\ell}_{k}$. Defining a controller $\bm{\ell}_k = \bm{\pi}(\br_k)$ produces a closed loop system,
For the HLIP model, the ROM dynamics are stabilized by modifying the step length, $\ell_{k} \in \R$. Defining a controller ${\ell}_k = \bm{\pi}(\br_k, c)$ to achieve some velocity command, $c \in \R$, produces a closed loop system:
\begin{equation}
    % \br_{k+1} = \bA_{\rm{cl}} \br_{k},
    \br_{k+1} = \bQ_{\rm{cl}}(\br_k) =  \bA_{\rm{R}} \br_{k} + \bB_{\rm{R}}\bm{\pi}(\br_k,c),
\end{equation}%
whose dynamics induce a fixed point $\br^* \in \mc{R}$. 
In this case, $\br^*$ can be rendered exponentially stable through the design of $\bm{\pi}(\br_k, c)$ as in \cite{xiong20223}. These discrete-time dynamics are the Poincar\'e map associated with \eqref{eq:hlip_continuous_dyn} which become a hybrid system based on changing the step size, and thus induce a hybrid periodic orbit, $\mc{O}_{\br}$.
% it is easy to render $\bA_{\rm{cl}}$ exponentially stable through the design of $\bm{\pi}(\br_k, c)$ and as a result, make $\br^*$ exponentially stable.
%
%%%%%%%%%%%%%%%%%%%%%%%%%%%%%%%%%%%%%%%%%%%%%%%%%%%%%%%%%%%%%%%%%%%%%%%%%%%%%%%
%
\subsection{HLIP Embedding onto Five-Link Biped}
As mentioned in Section \ref{sec:roms}, we first define a diffeomorphism $\bm{\Xi}:\mc{R} \rightarrow \mc{Z}$ to compare the HLIP state and biped's underactuated coordinates. We transform the position and velocity coordinates $(p, v)$ of the HLIP into pivot position and angular momentum of the biped $(q_{\mathfrak{u}}, \mathfrak{m})$:
\begin{equation} \label{eq:rom_to_fom_diffeomorphism}
    \bm{\Xi} \coloneq 
    \begin{bmatrix}
        \texttt{IK}_{\rm{base}}(p, z_0) \\
        m z_0 v
    \end{bmatrix}
    =
    \begin{bmatrix}
        q_1 \\ \mathfrak{m}
    \end{bmatrix}
    =
    \begin{bmatrix}
        z_1 \\ z_2
    \end{bmatrix}
\end{equation}
where $\texttt{IK}_{\rm{base}}(\cdot, \cdot)$ solves\footnote{In general, inverse kinematics can provide multiple solutions, but in practice this map can be designed to be smooth and return a unique solution, thus validating the diffeomorphism proposed in \eqref{eq:rom_to_fom_diffeomorphism}.} for $q_1$ given any $p$ and a fixed $z_0$ and the second entry computes the unactuated momentum $\mathfrak{m}$ given a velocity $v$.
We transfer the stability properties of the HLIP to the planar biped by first defining an interface between $\bz$ and $\br^*$. We define the control interface as a simple linear feedback control law:
\begin{equation} \label{eq:hlip_step_controller}
    \kappa(z_1) \coloneqq \ell^* + \bK (\bm{\varphi}^{\br}_{\tau(z_1)}(\bm{\Xi}^{-1}(\bz^+)) - \br^*),
\end{equation}
where $\ell^*\in\R$ is a feedforward term for achieving $c$, $\bK \in \R^{1\times2}$ is a gain matrix, and  $\bm{\varphi}^{\br}_{\tau(z_1)}(\bm{\Xi}^{-1}(\bz^+))$ is the pre-impact state of the FOM in ROM coordinates, approximated by the ROM flow \eqref{eq:hlip_continuous_dyn} for a post-impact state, $\bz^+$. Here, the post impact state $\bz^+$ is constant for the duration of the continuous phase and $\tau(z_1)$ is a parameterization of time similar to \cite{ames2017hybrid}.
% In practice, the FOM tracking of the ROM is realized using \textit{virtual constraints} \cite{grizzle2017virtual}, which in this application are:
% %
% \begin{equation} \label{eq:walking_virtual_constraints}
%     \by(\bm{\eta}, \bz) = 
%     \underbrace{
%     \begin{bmatrix}
%         p_z^a(\bq) \\ \theta^a(\bq) \\ \bp_{\rm{sw}}^a(\bq)
%     \end{bmatrix}
%     }_{\by^a(\bq)}
%     -
%     \underbrace{
%     \begin{bmatrix}
%         p_z^d \\ \theta^d \\ \bp_{\rm{sw}}^d(\kappa(z_1))
%     \end{bmatrix}
%     }_{\by^d(z_1)}.
% \end{equation}
% %
% Here, $(\cdot)^a$ and $(\cdot)^d$ denote actual and desired values respectively,  $p_z^d \in \R$ is the desired center of mass height, $\theta^d \in \R$ is the desired torso orientation, and $\bp^d_{\rm{sw}}: \R \rightarrow \mathbb{R}^2$ is the swing foot position. 
% We define 
As discussed in Section \ref{sec:HZD_to_ROM}, the mapping $\bm{\psi}$ relates the underactuated states to the actuated states. In this context, one method to define $\bm{\psi}$ is through inverse kinematics. 
% To see this, consider the desired outputs and via inverse kinematics compute the required $\bq_\mathfrak{a}$ and $\dot{\bq}_\mathfrak{a}$ to achieve $(\by^a(\bm{\eta},\bz) \rightarrow \by^d(\bz)) $. 
% This approach yields the relation $\bm{\eta} = \bm{\psi}(\bz)$ as follows:
We lift behaviors from the zero dynamics and ROM to desired behaviors for the actuated coordinates via:
\begin{equation}
    % \underbrace{
    % \begin{bmatrix}
    %     \bq_\mathfrak{a} \\ \dot{\bq}_\mathfrak{a}
    % \end{bmatrix}
    % }_{\bm{\eta}}
    \bm{\psi}(\bz)
    \coloneqq
    \texttt{IK}
    \big(
    \kappa(z_1), \dot{\kappa}(\bz), p_z^d, \theta^d
    \big).
\end{equation}
where $\texttt{IK}$ computes a desired value of $\bm{\eta} = (\bq_{\mathfrak{a}}, \dot{\bq}_{\mathfrak{a}})$ given the desired swing foot state, $(\kappa(z_1), \dot{\kappa}(\bz))$, and the desired center of mass height $p_z^{d}\in\R$ and torso angle $\theta^{d}\in\R$. 
% We set $p_z^{d} = z_0$. 
This choice of $\bpsi$ induces the manifold $\mc{M}$ in \eqref{eq:zero-dyn-manifold} to which the system is stabilized.
%
%%%%%%%%%%%%%%%%%%%%%%%%%%%%%%%%%%%%%%%%%%%%%%%%%%%%%%%%%%%%%%%%%%%%%%%%%%%%%%%
%

\begin{figure}[t!] 
    \centering
    \includegraphics[width=0.95\linewidth]{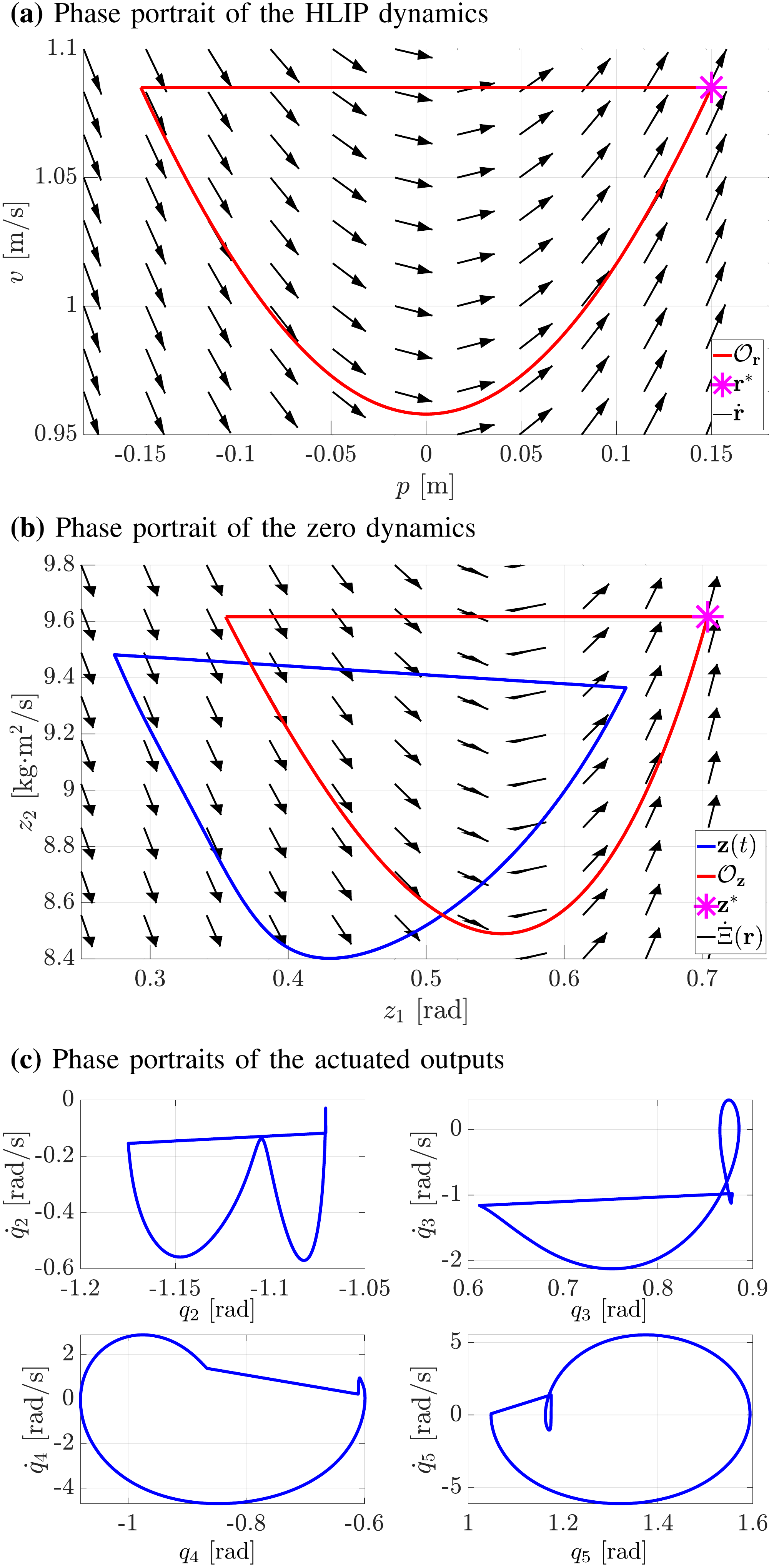}
    \vspace{0.2cm}
    \caption{The plots show steady-state phase portraits over 3 seconds for the robot's zero dynamics and actuated coordinates while walking at 1.0 m/s with a 0.3 s step period. (a) Shows the nominal periodic orbit of the reduced-order model, $\mathcal{O}_{\br}$. (b) Illustrates the ISS zero dynamics of the FOM alongside the transformed ROM dynamics, $\Xi$. (c) Shows the stability of the actuated coordinates.}
    \label{fig:results}
    \vspace{-4mm}
\end{figure}
%
%%%%%%%%%%%%%%%%%%%%%%%%%%%%%%%%%%%%%%%%%%%%%%%%%%%%%%%%%%%%%%%%%%%%%%%%%%%%%%%
%
\subsection{Implementation Details}
Our implementation differs from our presented theory in a few subtle ways to enhance performance and robustness. 
First, similar to \cite{csomay2024robust}, although our theory relies on an RES-CLF, we leverage PD control to stabilize the system to $\mc{M}$.
Second, we intermittently update $\bz^+$ in \eqref{eq:hlip_step_controller} to the current underactuated state $\bz$, as discussed in \cite{xiong20223,ghansah2024dynamic}. This effectively re-plans the desired swing foot trajectory based on the current state of the robot, rather than using an open-loop trajectory computed once post-impact. While beyond the scope of this paper, accounting for such re-planning in our theory is an important direction for future research.
%
%%%%%%%%%%%%%%%%%%%%%%%%%%%%%%%%%%%%%%%%%%%%%%%%%%%%%%%%%%%%%%%%%%%%%%%%%%%%%%%
%
\subsection{Analysis}
As shown in Fig. \ref{fig:results}, we stabilize the FOM model by approximating the underactuated dynamics with a ROM. 
% The theory also suggests a discrepancy between the ROM and FOM orbits. 
Specifically, in Fig. \ref{fig:results} we see that the zero dynamics orbit $\mc{O}_{\bz}$ resembles the ROM orbit $\mc{O}_{\br}$. 
Importantly, while the zero dynamics are not driven to $\mc{O}_{\rm{z}}$, they are driven to a neighborhood of this orbit, reflecting the ISS characterization of Lemma \ref{lemma:discrete-time-iss}.
Since the actuated (Fig. \ref{fig:results}b) and underactuated (Fig. \ref{fig:results}c) coordinates of the robot stabilize to a periodic orbit, via the decomposition, we see that the FOM state is also stabilized to a periodic orbit as guaranteed by Theorem \ref{theorem:main}.

%%%%%%%%%%%%%%%%%%%%%%%%%%%%%%%%%%%%%%%%%%%%%%%%%%%%%%%%%%%%%%%%%%%%%%%%%%%%%%%%
\section{Conclusion}\label{sec:conclusion}
This paper presented a formal framework for achieving stable locomotion for hybrid systems based on ROMs. Our key technical results establish connections between the stability of a ROM and that the full-order hybrid system using the notion of input-to-state-stability. Future efforts will be devoted to quantifying the gap between the zero dynamics and commonly used ROMs and leveraging ROMs to formally generate more dynamic walking behaviors.

%%%%%%%%%%%%%%%%%%%%%%%%%%%%%%%%%%%%%%%%%%%%%%%%%%%%%%%%%%%%%%%%%%%%%%%%%%%%%%%%
% \balance
% \clearpage

\bibliographystyle{ieeetr}
\bibliography{References/refs}

\end{document}